\documentclass[a4, 11pt, reqno, final]{amsart}

\usepackage{amsmath, amscd, amssymb, amsthm, amsfonts}
\usepackage{mathtools, mathrsfs}
\usepackage{graphics, graphicx}
\usepackage[colorlinks=true, urlcolor=blue, citecolor=blue, linkcolor=blue, raiselinks=true, breaklinks=true]{hyperref}
\usepackage{enumerate}
\usepackage{color}
\usepackage[numbers, sort&compress]{natbib}
\usepackage{placeins, caption, subcaption}
\usepackage{tabularx}
\usepackage{arydshln}
\usepackage[colorinlistoftodos]{todonotes}
\usepackage[notref,notcite]{showkeys}
\usepackage[margin=1in]{geometry}
\usepackage[utf8]{inputenc}
\usepackage{tikz, pgfplots, tikz-cd}
\usepackage{adjustbox}

\usepackage{bm}
\usepackage{bbm}

\def\R{\mathbb{R}}

\def\tr{\mathop {\rm tr}\nolimits}

\newcommand{\bfA}{{\bf A}}
\newcommand{\bfB}{{\bf B}}
\newcommand{\bfC}{{\bf C}}
\newcommand{\bfD}{{\bf D}}
\newcommand{\bfE}{{\bf E}}
\newcommand{\bfF}{{\bf F}}
\newcommand{\bfG}{{\bf G}}

\newcommand{\bfI}{{\bf I}}

\newcommand{\bfK}{{\bf K}}
\newcommand{\bfL}{{\bf L}}
\newcommand{\bfM}{{\bf M}}
\newcommand{\bfN}{{\bf N}}

\newcommand{\bfP}{{\bf P}}

\newcommand{\bfT}{{\bf T}}
\newcommand{\bfU}{{\bf U}}
\newcommand{\bfV}{{\bf V}}

\newcommand{\bfa}{{\bf a}}

\newcommand{\bfc}{{\bf c}}

\newcommand{\bfe}{{\bf e}}
\newcommand{\bff}{{\bf f}}
\newcommand{\bfg}{{\bf g}}

\newcommand{\bfp}{{\bf p}}

\newcommand{\bfu}{{\bf u}}
\newcommand{\bfv}{{\bf v}}

\newcommand{\bfx}{{\bf x}}
\newcommand{\bfy}{{\bf y}}

\newcommand{\calE}{\mathcal{E}}

\newcommand{\calL}{\mathcal{L}}

\newcommand{\calS}{\mathcal{S}}

\newcommand{\calU}{\mathcal{U}}

\newcommand{\bbC}{\mathbb{C}}
\newcommand{\bbR}{\mathbb{R}}

\def\bfzero{\boldsymbol 0}

\def\gammab{\boldsymbol{\varepsilon}}
\def\gammab{\boldsymbol{\gamma}}

\def\phib{\boldsymbol{\varphi}}

\theoremstyle{plain}
\newtheorem{theorem}{Theorem}[section]
\newtheorem{lemma}[theorem]{Lemma}

\newcommand{\QED}{\hspace{1ex}\hfill$\blacksquare$\vspace{2ex}}

\newcommand*{\tran}{^{\mkern-1.5mu\mathsf{T}}}

\newcommand{\inv}{^{\raisebox{.2ex}{$\scriptscriptstyle-1$}}}

\title{Volumetric Growth in Linear Elasticity Driven by
	an Optimality Criterion}

\author{Rohan Abeyaratne}
\address[R. Abeyaratne]{Department of Mechanical Engineering, Massachusetts Institute of Technology, Cambridge, MA, USA}
\email{rohan@mit.edu}

\author{Roberto Paroni}
\address[R. Paroni]{Department of Civil and Industrial Engineering, University of Pisa, Largo Lucio Lazzarino 1, 56122, Pisa, Italy}
\email{roberto.paroni@unipi.it}

\author{Marco Picchi Scardaoni}
\address[M. Picchi Scardaoni]{Department of Civil and Industrial Engineering, University of Pisa, Largo Lucio Lazzarino 1, 56122, Pisa, Italy}
\email{marco.picchiscardaoni@ing.unipi.it}

\begin{document}
	\sloppy

\title{Optimization-Driven Volumetric Growth in a Linearized Elasticity Framework}

	\date{\today}

	\begin{abstract}
Using linearized elasticity as a convenient mechanical framework, we show that volumetric growth can be formulated as an optimization-driven process in which the growth tensor is determined implicitly by constrained optimization rather than prescribed through phenomenological evolution laws. At each incremental step, the displacement and growth fields satisfy equilibrium, mass-balance constraints, and an irreversibility condition enforcing accretive growth, while an objective functional encodes the driving mechanism of the process.
 Finite element discretization leads to a finite-dimensional constrained minimization problem in the growth variables alone and makes explicit the interpretation of the evolution as a projected gradient flow. Numerical examples illustrate the proposed framework.
	\end{abstract}

	\maketitle
\tableofcontents

	\section{Introduction}

Volumetric growth describes the evolution of a body through the addition of mass generated throughout its bulk. This mechanism governs a wide range of biological and physical systems, including active media, cellular aggregates, and crystalline clusters. In such systems, newly formed material progressively integrates into the solid, thereby altering both its geometry and mechanical response over time.

The modeling of growth in elastic bodies has a long tradition in continuum mechanics, with major developments driven by applications to biological tissues, morphogenesis, tumor growth, additive manufacturing, and growth-induced instabilities. Volumetric growth is typically described by introducing internal variables, most commonly a \emph{growth tensor}, whose evolution is prescribed through suitable kinetic laws. A foundational framework was introduced in the seminal work of Skalak et al.~\cite{Skalak1982}, and the general morphoelastic theory of growth was systematically developed following the influential paper by Rodriguez et al.~\cite{Rodriguez1994}; see also the monographs by Goriely~\cite{Goriely2017} and Taber~\cite{Taber2020}.

A widely adopted approach relies on the multiplicative decomposition of the deformation gradient into elastic and growth contributions, namely $\bfF = \bfF_e \bfF_g$. This decomposition was originally introduced by Kröner and Lee in the context of finite plasticity in the 1960s (see~\cite{DelPiero2018} for a detailed discussion), where the deformation gradient is decomposed into elastic and plastic parts. In the context of volumetric growth, as proposed in~\cite{Rodriguez1994}, the plastic component is replaced by a growth tensor, which accounts for volumetric expansion and stress generation.

Within this framework, growth is typically governed by an evolution law for the internal variable, often depending on stress, strain, and/or biochemical stimuli (see, e.g., Di Carlo and Quiligotti~\cite{DiCarloQuiligotti2002}, Ambrosi and Mollica~\cite{Ambrosi2002}, Ambrosi and Guana~\cite{Ambrosi2007}, and Erlich and Zurlo~\cite{Erlich2024,Erlich2025}). Subsequent developments incorporated residual stresses and geometric incompatibility (e.g., Hoger~\cite{Hoger1986}, Chen and Hoger~\cite{Chen2000}, Epstein and Maugin~\cite{Epstein2000}), showing that incompatible growth can generate internal stresses even in the absence of external loads. These models also predict growth-induced instabilities and pattern formation (e.g., Goriely and Ben Amar~\cite{Goriely2005}, Moulton et al.~\cite{Moulton2013}).

Alternative approaches are based on differential geometry~\cite{Yavari2010,SozioYavari2019}, where the growing body is modeled as a Riemannian material manifold with an evolving metric. In this setting, internal stresses are associated with the nontrivial curvature of the manifold. It can be shown that this geometric framework is closely related to the multiplicative decomposition approach.

In contrast, the linearized elasticity regime has been comparatively less explored~\cite{jones1995,araujo2004}. In this setting, the counterpart of the multiplicative decomposition is an additive decomposition. A symmetric second-order tensor field $\bfE_g$ (referred to as growth strain or eigenstrain) is introduced and interpreted as an inelastic strain that modifies the stress-free configuration. The elastic strain is then given by $\bfE(\bfu) - \bfE_g$, where $\bfE(\bfu)$ denotes the symmetric gradient of the displacement field $\bfu$.

Such formulations are common in thermoelasticity, plasticity, and phase transformation models, where the inelastic strain is either prescribed or governed by a constitutive evolution equation. While these models capture stress redistribution due to incompatible strains, they remain essentially descriptive: the spatial distribution of growth is determined by an assumed kinetic law rather than selected through a mechanical principle.

In this work, instead of prescribing an evolution equation for $\bfE_g$, we determine its increment at each discrete time step by solving a constrained optimization problem. The unknowns are the displacement field $\bfu$ and the growth tensor field $\bfE_g$, subject to the following constraints:
\begin{enumerate}[(i)]
    \item the equilibrium equations of linear elasticity;
    \item prescribed mass increments, imposed either globally or locally;
    \item a monotonicity condition enforcing purely accretive growth.
\end{enumerate}

The objective functional encodes the driving mechanism of the process. It may represent mechanical criteria (e.g., minimization of external work) or geometric quantities (e.g., perimeter reduction). Growth thus emerges as an optimization-driven process: among all admissible growth distributions, the system selects the one that optimizes the chosen criterion while remaining mechanically admissible.

A surface growth version of this framework was proposed in~\cite{Abeyaratne2026} for one-dimensional problems. Here, we extend the theory to volumetric growth in a two-dimensional setting, with a straightforward generalization to three dimensions. A complementary formulation of optimally driven growth in nonlinear elasticity is currently in preparation.

The focus of this work is not on specific applications, but rather on the development of the underlying theoretical framework. Despite its simplicity, the model reveals several structural features of optimality-driven growth. In particular, prestress associated with accretion can affect the convexity of the compliance functional, as already observed in~\cite{Abeyaratne2026} in the one-dimensional case. When the functional is convex, the incremental problem admits a unique minimizer and the evolution is stable; when convexity is lost, nonuniqueness and localization may arise.

To prevent configurational discontinuities in time, we introduce a quadratic regularization term penalizing large deviations from the previous configuration. This approach is related to De Giorgi's theory of minimizing movements~\cite{DeGiorgi1993,Ambrosio2005}, and leads, in the time-continuous limit, to a constrained gradient flow.

The emergence of a gradient-flow structure is not entirely new. Similar evolution mechanisms arise in computational models for bulk growth, such as vertex models~\cite{fletcher2013,barton2017,PicchiScardaoni2022}, where the system evolves by minimizing a discrete energy along a (typically unconstrained) gradient descent trajectory.

After finite element discretization, the equilibrium equations reduce to a linear system, allowing elimination of the displacement field and yielding a finite-dimensional constrained minimization problem in the growth variables alone. This structure highlights the variational nature of the evolution and its interpretation as a projected gradient flow in the space of inelastic strains, regularized via a minimizing-movements scheme.

The paper is organized as follows. Section~\ref{seclgp} introduces the optimal growth problems in the continuous setting. Section~\ref{secfem} presents the corresponding finite element discretization. Section~\ref{secnum} reports the main numerical results and their interpretation. Finally, Section~\ref{secana} derives explicit, albeit approximate, solutions to the optimal evolution problems, providing further insight into the resulting growth dynamics.

	\section{The growth problem in linear elasticity}\label{seclgp}

In this section, we formulate the incremental growth problem in the setting of linearized elasticity, adopted here only as a simple framework in which the optimization-driven character of growth can be made explicit.
For clarity of exposition, we restrict attention to the two–dimensional setting; however, all arguments extend in a straightforward manner to three dimensions.

	Let $\Omega \subset \bbR^2$ be a bounded, simply connected domain with sufficiently regular boundary, representing the reference (initial, stress-free) configuration of a linearly elastic body.
	The boundary $\partial\Omega$ is decomposed into two disjoint parts:
	a Dirichlet portion $\partial_D\Omega$, where the body is clamped, and	a Neumann portion $\partial_N\Omega=\partial\Omega\setminus\partial_D\Omega$, where surface loads are prescribed.
	We introduce the space of kinematically admissible displacements\footnote{In a rigorous functional setting, one may take $\calU_0$ as the set of functions in  $H^1(\Omega, \bbR^2)$, the Sobolev space of square–integrable vector fields with square–integrable gradients, vanishing on $\partial_D\Omega$ in the trace sense. We omit further functional-analytic details, since our focus is on the mechanical aspects of the growth formulation.}
	\begin{equation}\label{U0}
		\calU_0 =
		\left\{
		\boldsymbol{\psi} : \Omega \to \mathbb{R}^2 \ \middle| \
		\boldsymbol{\psi} \text{ is regular enough and }
		\boldsymbol{\psi} = \mathbf{0} \text{ on } \partial_D \Omega
		\right\}.
	\end{equation}
	We assume for simplicity that the body undergoes only prescribed surface dead loads $\bfp:\partial_N\Omega\to\bbR^2$.

	Within the context of linearized elasticity, the displacement field $\bfu \in \calU_0$ minimizes the elastic energy
	\[\calE(\bfu) = \frac12\int_\Omega \bbC \bfE(\bfu)\cdot \bfE(\bfu) \ d\calL^2 - \int_{\partial_N\Omega} \bfp\cdot \bfu  \ d\calL^1\]
	on $\calU_0$. In the above expression, $\bbC$ is the fourth order elasticity tensor that maps symmetric tensors into symmetric tensors and satisfies the usual symmetry and positive-definiteness conditions. The tensor field $\bfE(\bfv) = \frac12(\nabla \bfv + \nabla \bfv\tran)$ is the (linearized) strain tensor associated with $\bfv \in \calU_0$, and $\bfT(\bfv) = \bbC \bfE(\bfv)$ is the corresponding Cauchy stress tensor.
	Equivalently, $\bfu$ solves the (weak) equilibrium problem
	$$\int_\Omega \bbC \bfE(\bfu)\cdot \bfE(\phib) \ d\calL^2 = \int_{\partial_N\Omega} \bfp\cdot \phib  \ d\calL^1\ \qquad \forall \phib \in\calU_0.
	$$
	It is also well-known that the trace of the strain tensor $\tr \bfE(\bfv)$ is the first order approximation of the relative local volume change due to the displacement $\bfv \in \calU_0$.

	Let $\calS$ be the vector space of symmetric tensor fields on $\Omega$:
	\begin{equation}\label{calE}
		\calS=\{\bfA :\Omega\to\bbR^{2\times 2}, \ \bfA = \bfA\tran\}.
		\end{equation}
	Given $\bfv \in \calU_0$, the strain field $\bfE(\bfv) \in \calS$.
	Nevertheless, a generic element $\bfE\in\calS$ does not necessarily coincide with the symmetric gradient of a displacement field (unless the Saint–Venant compatibility conditions are satisfied).

	To incorporate growth, we consider $\bfE_g \in \calS$ and we adopt the constitutive prescription commonly used to model anelastic effects (e.g., thermal effects, plasticity, etc.) (see for instance \cite{Taber2020, Davoli2024})
	\[\bfT(\bfv, \bfE_g) = \bbC[\bfE(\bfv)- \bfE_g].\]
	Indeed, $\bfE_g$ is interpreted as an inelastic strain that modifies the local stress-free configuration of the material.
	The quantity $\bfE(\bfv) - \bfE_g$ is the elastic strain, i.e., the part of the total strain that generates stress. Growth therefore acts essentially by shifting the stress-free state.

	Accordingly, given $\bfE_g$, the equilibrium displacement field $\bfu\in \calU_0$ verifies
	$$\int_\Omega \bbC [\bfE(\bfu) - \bfE_g]\cdot \bfE(\phib) \ d\calL^2 = \int_{\partial_N\Omega} \bfp\cdot \phib  \ d\calL^1\ \qquad \forall \phib \in\calU_0.
	$$
	Thus, growth enters in the mechanical problem as an internal source term.
	Reasoning as before, given $\bfE_g\in\calS$, the trace of the growth tensor $\tr\bfE_g$ can be interpreted as a local measure of the relative volume change due to $\bfE_g$. Consequently, the scalar $\int_\Omega \tr\bfE_g(\bfx) \ d\calL^2$ is the total change of volume (or mass, assuming constant density) of the body due to $\bfE_g$. Since  $\bfE_g$
	encodes inelastic volumetric changes, we call it a (linearized) growth tensor.

	Usually growth is a time dependent process. We are thus interested in time evolving growth tensors, at least in a discrete sense.
	The temporal evolution will be referred to as the growth process.
	Specifically, we consider a discrete evolution consisting of time steps. Quantities associated with the $i$-th step are denoted by a superscript $^{(i)}$, e.g., $\bfE_g^{(i)}$.
	For each iteration $i\geq 0$, given $\bfE_g^{(i)}$, $\bfu^{(i)} \in \calU_0$ satisfies
	$$\displaystyle\int_\Omega \bbC [\bfE(\bfu^{(i)}) - \bfE_g^{(i)}]\cdot \bfE(\phib) \ d\calL^2 = \int_{\partial_N\Omega} \bfp\cdot \phib  \ d\calL^1\ \qquad \forall \phib \in\calU_0.$$

The mass increment (assuming unit mass density) between steps $i-1$ and $i$
may be specified either globally or locally.

In the former case, for every iteration $i$, it can be specified in two ways:
\begin{equation}\label{ttg}
	\int_\Omega \displaystyle \tr(\bfE_g^{(i)})(\bfx) \ d\calL^2\ \ {=\atop \leq}\ \int_\Omega\tr(\bfE_g^{(i-1)})(\bfx) \ d\calL^2 + \Gamma^{(i)}|\Omega|
\end{equation}
where $\Gamma^{(i)}\in\bbR$ and $|\Omega|$ denotes the volume of $\Omega$.

In the case of equality, the quantity $\Gamma^{(i)}|\Omega|$ represents the volume of material added between steps $i-1$ and $i$; in the case of inequality, $\Gamma^{(i)}|\Omega|$ represents the volume available for growth.

In words, \eqref{ttg} with equality states that the global volume change in the body due to the tensor field $\bfE_g^{(i)}$ (left-hand side) is equal to the global volume change at the previous iteration plus the supplied volume of new material.

A similar interpretation holds for the inequality case. The main difference is that, with the inequality sign, the provided material may be partially or entirely unused.

When the material increment is specified locally, we consider the pointwise counterpart of \eqref{ttg}:
\begin{equation}\label{ttg2}
	\tr(\bfE_g^{(i)})(\bfx) \ \ {=\atop \leq}\ \tr(\bfE_g^{(i-1)})(\bfx) + \Gamma^{(i)}(\bfx) \qquad \forall \bfx \in \Omega
\end{equation}
where now $\Gamma^{(i)}:\Omega\to\bbR$ is a function prescribing the local mass increment (or possible increment) at iteration $i$.

Clearly, with a local prescription of the mass increment, growth is determined pointwise in $\Omega$, whereas with a global prescription it is imposed only in an average sense, so that some points may grow significantly more than others.

	We restrict our attention to purely accretive processes: material can only be added locally. This is tantamount to requiring that during growth material fibers do not shorten. That is
$$
	\bfE_g^{(i)}(\bfx)\bfv\cdot\bfv \geq \bfE_g^{(i-1)}(\bfx)\bfv\cdot\bfv \qquad \forall \bfv \in \bbR^2, \ \forall \bfx \in \Omega
	$$
	which is equivalent to requiring that $(\bfE_g^{(i)}-\bfE_g^{(i-1)})$ is a positive semidefinite tensor
	at each point of the body. Invoking the spectral theorem, it is useful to write this latter condition as
	\begin{equation}\label{lmin}
		\displaystyle\lambda_{min}\left((\bfE_g^{(i)}-\bfE_g^{(i-1)})(\bfx)\right) \geq 0 \qquad  \forall \bfx \in \Omega
	\end{equation}
	where $\lambda_{min}:\calS \to \bbR$ returns the smallest eigenvalue of its argument.

	In our approach, growth is not prescribed through an evolution equation. Instead, at each time step it is self-determined by solving a constrained minimization problem.
	We postulate that during the growth process an objective function is minimized at each $i$-th step. We consider functionals of the form
	\begin{equation}\label{obj}
		\Phi\left(\bfu^{(i)},\bfE_g^{(i)}, \ldots \right) + \frac{1}{2\tau} \int_\Omega |\bfE_g^{(i)}-\bfE_g^{(i-1)}|^2 \ d\calL^2
	\end{equation}
	where $\Phi$ is some scalar-valued function and $\tau>0$. We remark that the tensor field $\bfE_g^{(i-1)}$ is known at iteration $i$. The second term in \eqref{obj} plays the role of a minimizing–movements regularization.
	Roughly speaking, it is a weak notion of time derivative for $\bfE_g$ that penalizes abrupt temporal changes of growth. It is useful as a regularization term  when, for instance, $\Phi$ is not strictly convex with respect to its arguments, as seen in the examples below. For further motivation of the second term, we refer to \cite{Abeyaratne2026}, where a similar term was introduced in the context of surface growth.

The prototypical problem we deal with in this paper is the following:

\noindent
Given the initial growth $\bfE_g^{(0)}$, the regularization parameter $\frac{1}{2\tau}$, the surface load $\bfp$, the elasticity tensor $\bbC$, and the prescribed mass increments $\Gamma^{(i)}$, for all $i=1,\dots,N_{\mathrm{iter}}$, solve
\begin{equation}\label{peropt}
		\begin{cases}
			\displaystyle \min_{\bfu^{(i)}\in\calU_0\atop \bfE_g^{(i)}\in\calS} & \displaystyle \Phi\left(\bfu^{(i)},\bfE_g^{(i)}, \ldots \right) + \frac{1}{2\tau} \int_\Omega |\bfE_g^{(i)}-\bfE_g^{(i-1)}|^2 \ d\calL^2,\\
			& \displaystyle\int_\Omega \bbC [\bfE(\bfu^{(i)}) - \bfE_g^{(i)}]\cdot \bfE(\phib) \ d\calL^2 = \int_{\partial_N\Omega} \bfp\cdot \phib  \ d\calL^1\ \qquad \forall \phib \in\calU_0,\\
			& \displaystyle \text{either } \eqref{ttg} \text{ or } \eqref{ttg2}, \text{ and }\\
			& \displaystyle\lambda_{min}(\bfE_g^{(i)}-\bfE_g^{(i-1)})(\bfx) \geq 0 \qquad  \forall \bfx \in \Omega.
		\end{cases}
	\end{equation}

\subsection{A formal continuous limit}

In this section, we formally derive the continuous-in-time limit of the discrete-in-time problem \eqref{peropt}.

Let $\bfu^{(i)}[\bfE_g^{(i)}]\in\mathcal{U}_0$ be the solution of
$$
\int_\Omega \bbC [\bfE(\bfu^{(i)}) - \bfE_g^{(i)}]\cdot \bfE(\phib) \ d\calL^2 = \int_{\partial_N\Omega} \bfp\cdot \phib  \ d\calL^1\ \qquad \forall \phib \in\mathcal{U}_0,
$$
and define
$$
\widehat  \Phi\left(\bfE_g^{(i)}\right):= \Phi\left(\bfu^{(i)}[\bfE_g^{(i)}],\bfE_g^{(i)}\right).
$$

Assuming, for instance, that \eqref{ttg} holds with inequality, problem \eqref{peropt} can be written as
\begin{equation}\label{peropt2}
	\begin{cases}
		\displaystyle \min_{\bfE_g^{(i)}\in\mathcal{E}} &  \displaystyle \widehat\Phi\left(\bfE_g^{(i)}\right) + \frac 1{2\tau} \int_\Omega |\bfE_g^{(i)}-\bfE_g^{(i-1)}|^2 \ d\calL^2\\
		&  \displaystyle\int_\Omega \displaystyle \tr(\bfE_g^{(i)})(\bfx) \ d\calL^2\leq \int_\Omega\tr(\bfE_g^{(i-1)})(\bfx) \ d\calL^2 + \Gamma^{(i)}|\Omega|,\\
		& \displaystyle\lambda_{min}(\bfE_g^{(i)}-\bfE_g^{(i-1)})(\bfx) \geq 0 \qquad  \forall \bfx \in \Omega.
	\end{cases}
\end{equation}
If $\bfE_g^{(i)}$ solves this problem, then it also satisfies the following optimality conditions:
\begin{equation}\label{peropt2}
	\begin{cases}
		&\displaystyle  d\widehat\Phi\left(\bfE_g^{(i)}\right)[\bfG^{(i)}] + \int_\Omega \frac {\bfE_g^{(i)}-\bfE_g^{(i-1)}}{\tau} \cdot \bfG^{(i)} \ d\calL^2 =0 \qquad\forall \bfG^{(i)}\in\mathcal{G}^{(i)},\\
		&  \displaystyle\int_\Omega \displaystyle \tr(\bfE_g^{(i)}-\bfE_g^{(i-1)})(\bfx) \ d\calL^2\leq \Gamma^{(i)}|\Omega|,\\
		& \displaystyle\big(\bfE_g^{(i)}(\bfx)-\bfE_g^{(i-1)}(\bfx)\big)\bfv\cdot\bfv \geq 0 \qquad  \forall \bfx \in \Omega,\ \forall \bfv\in \R^2,
	\end{cases}
\end{equation}
where
\[\mathcal{G}^{(i)}:=\begin{multlined}[t]
\{\bfG:\Omega\to \bbR^{2\times 2}_{sym}, \ \int_{\Omega}\tr\bfG \ d\calL^2 = 0,  \bfG=\mathbf{0} \text{ on } \{\bfx\in\Omega \ : \ \bfE_g^{(i)}(\bfx)=\bfE_g^{(i-1)}(\bfx)\}\}
\end{multlined}\]
and where $d\widehat\Phi(\bfE_g^{(i)})$ denotes the Fréchet derivative of $\widehat\Phi$ evaluated at $\bfE_g^{(i)}$.

Let $t_i=i\tau$, for $i=0,1,2,\ldots$, and denote by $M(t)$ the total mass at time $t$. Then
$$
\Gamma^{(i)}=M(t_i)-M(t_{i-1}).
$$

Define the piecewise constant interpolants $\bfE_g^\tau$ and $\bfG^\tau$ by
$$
\bfE_g^\tau(t)=\bfE_g^{(i)},\quad \bfG^\tau(t)=\bfG^{(i)},\qquad \mbox{for }t\in [t_{i},t_{i+1}).
$$

Assuming that $\bfE_g^\tau \to \bfE_g$ and $\bfG^\tau \to \bfG$ as $\tau\to 0$, the formal limit of \eqref{peropt2} reads
\begin{equation}\label{peropt3}
	\begin{cases}
		&\displaystyle  d\widehat\Phi\left(\bfE_g\right)[\bfG] + \int_\Omega \frac {d\bfE_g}{dt} \cdot \bfG \ d\calL^2 =0 \qquad\forall \bfG\in\mathcal{G},\\
		&  \displaystyle\int_\Omega \displaystyle \tr\big(\frac {d\bfE_g}{dt}(\bfx)\big) \ d\calL^2\leq\frac {dM}{dt}|\Omega|,\\[8pt]
		& \displaystyle\big(\frac {d\bfE_g}{dt}(\bfx)\big)\bfv\cdot\bfv \geq 0 \qquad  \forall \bfx \in \Omega,\ \forall \bfv\in \R^2
	\end{cases}
\end{equation}
where
\[\mathcal{G}:=
	\{\bfG:\Omega\times(0,\infty)\to \bbR^{2\times 2}_{sym}, \ \int_{\Omega}\tr\bfG(\bfx,t) \ d\calL^2 = 0,  \bfG=\mathbf{0} \text{ on } \{(\bfx,t) \ : \ \frac{d\bfE_g}{dt}(\bfx,t)=\mathbf{0}\}\}.
\]
Finally, we observe that $d\widehat{\Phi}$ can be expressed in terms of $\Phi$ as
$$
d\widehat\Phi\left(\bfE_g\right)[\bfG] = d_1\Phi\left(\bfu[\bfE_g],\bfE_g\right)[d\bfu_\bfG] + d_2\Phi\left(\bfu[\bfE_g],\bfE_g\right)[\bfG],
$$
where $d_1\Phi$ and $d_2\Phi$ denote the Fréchet derivatives of $\Phi$ with respect to its first and second arguments, respectively, and $d\bfu_\bfG$ is the solution of
$$
\int_\Omega \bbC [\bfE(d\bfu_\bfG)]\cdot \bfE(\phib) \ d\calL^2 = \int_{\Omega} \bbC \bfG\cdot \bfE(\phib)  \ d\calL^2\ \qquad \forall \phib \in\mathcal{U}_0.
$$
As emphasized in the introduction, the proposed theory of growth does not rely on a prescribed evolution equation for $\bfE_g$; instead, the evolution arises intrinsically from the formulation. This feature is evident in problem~\eqref{peropt3}.

\subsection{Three representative problems}
	In what follows we focus on three representative cases to illustrate the predictive capabilities of the optimization-driven growth framework developed in this section:
\begin{enumerate}[(i)]
	\item A doubly-clamped beam under external work minimization;
	\item A cantilever beam under external work minimization;
	\item A free body evolving under perimeter minimization.
\end{enumerate}
In all cases, we consider a linearly isotropic elastic body with Young's modulus $E$ and Poisson's ratio $\nu$.
The initial natural (i.e., stress-free) configuration occupies the region $\Omega = (0,\ell)\times(0,h)$.

In the beam cases, the body is clamped on the segment $\partial_D\Omega = \{0\}\times(0,h)$ in the cantilever case and on $\partial_D\Omega = \{0,\ell\}\times(0,h)$ in the doubly clamped case. It is subjected to a constant distributed load $\bfp = -p\,\bfe_2$ ($p>0$) applied on the upper boundary $(0,\ell)\times\{h\}$. The remaining part of the Neumann boundary is traction-free.

In the perimeter-driven case, the boundary is entirely of Neumann type and traction-free. To eliminate rigid-body motions, two boundary points are constrained in an isostatic fashion.

We assume for simplicity, for all cases,
\begin{itemize}
	\item  equality in \eqref{ttg} and \eqref{ttg2},
	\item constant mass increment: $\Gamma^{(i)} = \Gamma = \mathrm{constant }>0$ in \eqref{ttg} and  \eqref{ttg2}  at every iteration,
	\item  zero initial growth: $\bfE_g^{(0)}=\mathbf{0}$.
\end{itemize}

The problem formulations are therefore:
	\begin{equation}\label{workopt22cont}
			\displaystyle
			\begin{cases}
					\min_{\bfu^{(i)}\in\calU_0\atop \bfE_g^{(i)}\in\calS} & \displaystyle \int_{\partial_N\Omega} \bfp \cdot \bfu^{(i)} \ d\calL^1 + \frac{1}{2\tau} \int_\Omega |\bfE_g^{(i)}-\bfE_g^{(i-1)}|^2 \ d\calL^2\\
					& \displaystyle\int_\Omega \bbC [\bfE(\bfu^{(i)}) - \bfE_g^{(i)}]\cdot \bfE(\phib) \ d\calL^2 = \int_{\partial_N\Omega} \bfp\cdot \phib  \ d\calL^1\ \qquad \forall \phib \in\calU_0,\\
					& \displaystyle\int_\Omega  \tr(\bfE_g^{(i)})(\bfx) \ d\calL^2= \int_\Omega\tr(\bfE_g^{(i-1)})(\bfx) \ d\calL^2 + \Gamma|\Omega|,\\
					& \displaystyle\lambda_{min}(\bfE_g^{(i)}-\bfE_g^{(i-1)}) \geq 0 \qquad  \text{ in } \Omega
				\end{cases}
		\end{equation}
for the beams and
	\begin{equation}\label{peropt22cont}
			\begin{cases}
					\displaystyle \min_{\bfu^{(i)}\in\calU_0\atop \bfE_g^{(i)}\in\calS} & \displaystyle P((\mathbf{id} + \bfu^{(i)})(\Omega)) + \frac{1}{2\tau} \int_\Omega |\bfE_g^{(i)}-\bfE_g^{(i-1)}|^2 \ d\calL^2\\
					& \displaystyle\int_\Omega \bbC [\bfE(\bfu^{(i)}) - \bfE_g^{(i)}]\cdot \bfE(\phib) \ d\calL^2 = 0 \qquad \forall \phib \in\calU_0,\\
					& \displaystyle\tr(\bfE_g^{(i)})(\bfx) = \tr(\bfE_g^{(i-1)})(\bfx)  + \Gamma \qquad \forall \bfx \in \Omega,\\
					& \displaystyle\lambda_{min}(\bfE_g^{(i)}-\bfE_g^{(i-1)}) \geq 0 \qquad  \text{ in } \Omega.
				\end{cases}
		\end{equation}
for the perimeter case. In the above problem the objective function is the perimeter of the deformed configuration $(\mathbf{id} + \bfu^{(i)})(\Omega)$:
\[P((\mathbf{id} + \bfu^{(i)})(\Omega)).\]
In the beam problems the mass constraint is global, whereas in the perimeter case it is local. This distinction is not essential, but serves to demonstrate the flexibility of our framework.
The spaces $\calU_0, \calS$ are defined in \eqref{U0} and \eqref{calE}.

	\section{Finite dimensional approximation of the problem}\label{secfem}

	In this section we derive a finite dimensional approximation of the incremental growth problem introduced at the end of Section \ref{seclgp} by means of the finite element method. The purpose is twofold:
	to transform the continuous constrained minimization problem into an explicit algebraic system,
	and to make transparent the structural coupling between displacement and growth variables.
	This reduction plays a crucial role in the analytical developments of Section \ref{secana}.

	We begin by rewriting the strain tensor in vector form, which is convenient for finite element implementation.
	Let $\bfe_i$ denote the  canonical basis of $\mathbb{R}^2$ and, with a slight abuse of notation, also of $\R^3$.

	For a displacement $\bfu : \Omega \to \mathbb{R}^2$ we set $\boldsymbol{\varepsilon}(\bfu):\Omega \to \bbR^3$ as
	\[
	\begin{aligned}
		\boldsymbol{\varepsilon}(\bfu) &=E_{11}(\bfu)\bfe_1+E_{22}(\bfu)\bfe_2+2E_{12}(\bfu)\bfe_3\\
		&=\dfrac{\partial u_1}{\partial x_1}\bfe_1+\dfrac{\partial u_2}{\partial x_2}\bfe_2+\big(\dfrac{\partial u_1}{\partial x_2} + \dfrac{\partial u_2}{\partial x_1}\big)\bfe_3.
	\end{aligned}
	\]
	as the vector representation of the strain $\bfE(\bfu)$.
	With this notation, the elasticity tensor is represented by a symmetric $3\times 3$ matrix $\bfC$ such that
	$$
	\bfC\boldsymbol{\varepsilon}(\bfu)\cdot \boldsymbol{\varepsilon}(\bfv)=
	\bbC \bfE(\bfu) \cdot \bfE(\bfv)
	$$
	for all displacements $\bfu$ and $\bfv$.
	For an isotropic material in plane stress conditions,
	\[
	\mathbf{C} = \frac{E}{1 - \nu^2}
	\begin{pmatrix}
		1 & \nu & 0\\[4pt]
		\nu & 1 & 0\\[4pt]
		0 & 0 & \frac12(1 - \nu)
	\end{pmatrix},
	\]
	where $E$ and $\nu$ denote the Young's modulus and the Poisson ratio,
	respectively.
	Similarly, the growth tensor is represented in vector form as
	\[
	\gammab^{(i)} =(\bfE_g^{(i)})_{11}\bfe_1+(\bfE_g^{(i)})_{22}\bfe_2+2(\bfE_g^{(i)})_{12}\bfe_3.
	\]
	With this notation, the equilibrium equation \eqref{peropt} at step $i$  rewrites as
	\begin{equation}\label{dper1}
		\int_\Omega \mathbf{C}\boldsymbol{\varepsilon}(\bfu^{(i)}) \cdot
		\boldsymbol{\varepsilon}(\phib) \, d\calL^2
		=
		\int_\Omega \mathbf{C}\gammab^{(i)} \cdot
		\boldsymbol{\varepsilon}(\phib) \, d\calL^2 + \int_{\partial_N\Omega} \bfp \cdot \phib \ d\calL^1 \ \qquad \forall \phib \in\calU_0.
	\end{equation}

	Assume for simplicity that $\Omega \subset \mathbb{R}^2$ is a polygonal domain in the plane.
	Given $h>0$, we define a \emph{triangulation}  of $\Omega$, that is, a partition of $\Omega$ into a set $\mathcal{T}_h$ of triangles $T_e$ satisfying the following properties:
	\begin{itemize}
		\item $\mathcal{T}_h$ consists of $N$ nodes (the triangle vertices), denoted by $\mathbf{x}_i$, $i=1,\ldots,N$, and $N_e$ triangles (elements), denoted by $T_e$, $e=1,\ldots,N_e$;
		\item the domain is the union of the triangles:
		$\Omega = \bigcup_{e=1}^{N_e} T_e$;
		\item the intersection of the interiors of any two distinct triangles is empty;
		\item no vertex of a triangle lies in the interior of an edge belonging to another triangle;
		\item every triangle in $\mathcal{T}_h$ has diameter less than or equal to $h$.
	\end{itemize}

	We introduce the standard nodal basis functions $ \{\eta_i\} : \Omega \to \mathbb{R}$, which are globally continuous on $\Omega$ up to the boundary, affine on each triangle of  $\mathcal{T}_h$, and satisfying the condition
	$\eta_i(\mathbf{x}_j) = \delta_{ij}$,
	$i, j = 1, \ldots, N$.\\
	We recall that for any subset $A$ of $\Omega$,  the indicator function of $A$ is the function  $1_{A}$
	defined by  $1_{A}(\bfx)=1$ if $\bfx \in A$ and $1_{A}(\bfx)=0$ otherwise in $\Omega$.
	Let
	\begin{itemize}
		\item $\widehat{\bfe}_i$, for $i = 1,2,\ldots,2N$, be the canonical basis of $\mathbb{R}^{2N}$;
		\item $\widetilde{\bfe}_i$, for $i = 1,2,\ldots,N_e$, be the canonical basis of $\mathbb{R}^{N_e}$;
		\item $\bar{\bfe}_i$, for $i = 1,2,\ldots,3N_e$, be the canonical basis of $\mathbb{R}^{3N_e}$.
	\end{itemize}
	We define the following spaces
	\[
	\begin{aligned}
		\calU^h &= \{
		\bfv : \Omega \to \mathbb{R}^2 :
		\exists\, \widehat{\bfv} \in \mathbb{R}^{2N} \ \text{such that } \bfv(\bfx) = \sum_{i=1}^N \widehat{v}_{2i-1}\eta_i(\bfx)\bfe_1+
		\widehat{v}_{2i}\eta_i(\bfx)\bfe_2
		\}\\
		\calS^h &= \{
		\gammab : \Omega \to \mathbb{R}^3 :
		\exists\, \bar{\gammab} \in \mathbb{R}^{3N_e} \ \text{such that }\\
		&\hspace{4cm}
		\gammab(\bfx) = \sum_{e=1}^{N_e} 1_{T_e}(\bfx)(\bar{\gamma}_{3e-2}\bfe_1+ \bar{\gamma}_{3e-1}\bfe_2+ \bar{\gamma}_{3e}\bfe_3)
		\}.
	\end{aligned}
	\]
	In plane language, $\calU^h$ is the space of vector-valued functions that are
	continuous on $\Omega$ and affine on each triangle of
	$\mathcal{T}_h$,
	and $\calS^h$ is the space of vector-valued functions that are
	constant on each triangle of
	$\mathcal{T}_h$ and, in general, discontinuous on $\Omega$. In other words, this choice corresponds to piecewise constant growth tensors, allowing discontinuities across element boundaries.\\
	There is a one-to-one correspondence between the functions of these spaces and the coefficient vectors that they generate:
	for instance
	$$
	\mbox{if }\bfv\in\calU^h\quad\Rightarrow\quad \exists!\ \widehat{\bfv} \in \mathbb{R}^{2N} \ \text{such that } \bfv = \sum_{i=1}^N \widehat{v}_{2i-1}\eta_i\bfe_1+
	\widehat{v}_{2i}\eta_i\bfe_2,
	$$
	and similarly for $\calS^h$.
	Moreover, we can write
	\[
	\bfv(\bfx)=\mathbf{N}(\bfx)\, \widehat{\bfv},
	\quad\mbox{and}\quad
	\boldsymbol{\varepsilon}(\bfv)(\bfx)= \mathbf{D}(\bfx)\, \widehat{\bfv}
	\]
	where the matrix of nodal basis functions $\bfN:\Omega\to\R^{2\times 2N}$ and the matrix of their symmetric gradient $\bfD:\Omega\to\R^{3\times 2N}$ are given by
	\begin{equation}\label{popt2}
		\begin{aligned}
			\mathbf{N}(\bfx) &:=\Big(
			\bfe_1 \otimes \sum_{i=1}^N \eta_i(\bfx)\, \widehat{\bfe}_{2i-1}
			+ \bfe_2 \otimes \sum_{i=1}^N \eta_i(\bfx)\, \widehat{\bfe}_{2i}
			\Big),\\
			\mathbf{D}(\bfx)
			&:=
			\sum_{i=1}^N
			\left(\frac{\partial \eta_i(\bfx)}{\partial x_1}
			\bigl(\bfe_1 \otimes \widehat{\bfe}_{2i-1}
			+ \bfe_3 \otimes \widehat{\bfe}_{2i}\bigr)
			+
			\frac{\partial \eta_i(\bfx)}{\partial x_2}
			\bigl(\bfe_2 \otimes \widehat{\bfe}_{2i}
			+ \bfe_3 \otimes \widehat{\bfe}_{2i-1}\bigr)\right).
		\end{aligned}
	\end{equation}
	To discretize problem \eqref{peropt}, we further introduce the subspace $\calU^h_0$ of $\calU^h$ of functions that satisfy the Dirichlet condition on $\partial_D\Omega$, and we consider the (vectorized) growth tensor to belong to $\calS^h$:
	\[
	\bfu^{(i)}\in\calU^h_0 :=
	\left\{
	\bfg \in \calU^h : \bfg = \mathbf{0} \text{ on } \partial_D \Omega
	\right\}
	\quad\mbox{and}\quad
	\gammab^{(i)} \in \calS^h.
	\]
	We denote by $\widehat\bfu^{(i)}\in\R^{2N}$ and $\bar\gammab^{(i)}\in\R^{3N_e}$ the corresponding coefficient vectors.
	It is clear that $\widehat\bfu^{(i)}$ belongs to
	$$
	\widehat V^h_0:=\{\widehat\bfg\in\R^{2N} \mid \widehat\bfg\cdot \widehat\bfe_{2i-1} = \widehat\bfg\cdot \widehat\bfe_{2i}=0 \ \text{ if } \bfx_i \in \partial_D\Omega \},
	$$
	the subspace of coefficient vectors relative to functions in $\calU_0^h$.

	Let $\bfP:\R^{2N}\to \widehat V^h_0$ be the orthogonal projection of $\R^{2N}$ onto $\widehat V^h_0$.
	The left side of \eqref{dper1} writes
	\begin{equation}\label{dper1l}
		\int_\Omega \mathbf{C}\boldsymbol{\varepsilon}(\bfu^{(i)}) \cdot
		\boldsymbol{\varepsilon}(\phib) \, d\calL^2
		=(\bfP\tran \int_\Omega \bfD\tran\mathbf{C}\bfD\, d\calL^2 \bfP)\ \widehat\bfu^{(i)} \cdot
		\widehat\phib
	\end{equation}
	while the right side of \eqref{dper1} leads to
	\begin{equation}\label{dper1r}
		\begin{aligned}
			\int_\Omega \mathbf{C}\gammab^{(i)} \cdot
			\boldsymbol{\varepsilon}(\phib) \, d\calL^2&=\int_\Omega \mathbf{C} \sum_{e=1}^{N_e}1_{T_e}( \bar\gamma^{(i)}_{3e-2}\bfe_1+ \bar\gammab^{(i)}_{3e-1}\bfe_2+ \bar\gammab^{(i)}_{3e}\bfe_3)\cdot\bfD\bfP\widehat\phib\, d\calL^2\\
			&=(\sum_{e=1}^{N_e}\int_{T_e} \bfP\tran\bfD\tran\mathbf{C} (\bfe_1\otimes\bar\bfe_{3e-2}+ \bfe_2\otimes\bar\bfe_{3e-1}+ \bfe_3\otimes\bar\bfe_{3e})\, d\calL^2) \bar\gammab^{(i)}\cdot\widehat\phib\\
			&=(\bfP\tran\sum_{e=1}^{N_e}(\int_{T_e} \bfD\tran\mathbf{C} \, d\calL^2) (\bfe_1\otimes\bar\bfe_{3e-2}+ \bfe_2\otimes\bar\bfe_{3e-1}+ \bfe_3\otimes\bar\bfe_{3e})) \bar\gammab^{(i)}\cdot\widehat\phib.
		\end{aligned}
	\end{equation}
	Similarly, if we denote by $\bfp^h$ the element of $\calU^h$ such that
	$\bfp^h = \bfp$ at all $\bfx_i \in \partial_N\Omega$ and $\bfp^h = \mathbf{0}$ at all $\bfx_i \in \Omega$, there exists a unique coefficient vector $\widehat{\bfp}\in \bbR^{2N}$ associated to $\bfp^h$ and such that $\bfp(\bfx) = \bfN(\bfx) \widehat{\bfp}$. Accordingly
	\begin{equation}\label{dper1rl}
		\begin{aligned}
			\int_{\partial_N\Omega} \bfp \cdot
			\phib \, d\calL^1&=(\bfP\tran\int_\Omega \bfN\tran\bfN  \ d\calL^2) \widehat{\bfp}\cdot\widehat\phib.
		\end{aligned}
	\end{equation}

	Setting
	\begin{equation}\label{dper3}
		\begin{aligned}
			{\mathbf{K}} &:=
			\bfP\tran (\int_\Omega \bfD\tran(\bfx)\mathbf{C}\bfD(\bfx)\, d\calL^2) \bfP,\\
			\bfB &:=\bfP\tran\sum_{e=1}^{N_e}(\int_{T_e} \bfD\tran(\bfx)\mathbf{C} \, d\calL^2) (\bfe_1\otimes\bar\bfe_{3e-2}+ \bfe_2\otimes\bar\bfe_{3e-1}+ \bfe_3\otimes\bar\bfe_{3e}),\\
			\bff &:= (\bfP\tran\int_\Omega \bfN\tran(\bfx)\bfN(\bfx)  \ d\calL^2) \widehat{\bfp}
		\end{aligned}
	\end{equation}
	as the stiffness matrix, the matrix coupling growth and displacements, the vector representing the external loads,
	the discrete counterpart of the equilibrium equation in \eqref{peropt} reads
	\begin{equation}\label{dper4}
		\bfK\, \widehat\bfu^{(i)}=\bfB\,\bar\gammab^{(i)} + \bff,
	\end{equation}
	which admits a unique solution in $\widehat V^h_0$.

	\subsection*{Discretization of the mass balance \eqref{ttg} and \eqref{ttg2}}
	We start with the local version of the constraint. The (discrete) trace of the growth tensor on element $T_e$ is
$\bar\gamma_{3e-2} + \bar\gamma_{3e-1}$ and is constant on the element.  Unless $\Gamma^{(i)}$ in \eqref{ttg2} is constant on each element, we may consider the pointwise constraint to hold in the mean. We can thus pose
\begin{equation}\label{pippo}
	\sum_{e=1}^{N_e} 1_{T_e}(\bfx)[(\bar\gamma^{(i)}_{3e-2} - \bar\gamma^{(i-1)}_{3e-2} + \bar\gamma^{(i)}_{3e-1} - \bar\gamma^{(i-1)}_{3e-1}] \le \sum_{e=1}^{N_e} \frac{1_{T_e}(\bfx)}{|T_e|}\int_{T_e}\Gamma^{(i)}\,d\calL^2.\end{equation}
	It is convenient to recast this inequality into a vector form. To this aim, let
	$$
	\widetilde \bfa ^{(i)}:=\sum_{e=1}^{N_e}\frac{1}{|T_e|}\int_{T_e}\Gamma^{(i)}\,d\calL^2\,\widetilde\bfe_e
	$$
	be the average value of $\Gamma^{(i)}$ on the $e$-th element,
	and let
	$$
	\bfA:=\sum_{e=1}^{N_e} \widetilde\bfe_e\otimes(\bar\bfe_{3e-2}+\bar\bfe_{3e-1}).
	$$
	Then \eqref{pippo} recasts as
	\begin{equation}\label{dper4}
		\bfA\,(\bar\gammab^{(i)}-\bar\gammab^{(i-1)})\le \widetilde \bfa ^{(i)}.
	\end{equation}
	$\bfA$ is a matrix extracting elementwise traces, and $\widetilde\bfa ^{(i)}$ contains the elementwise averages of $\Gamma^{(i)}$.

	With a similar argument, we deduce the discrete version of the global form \eqref{ttg}:
	\[\bfa\cdot(\bar\gammab^{(i)}-\bar\gammab^{(i-1)})  \leq \Gamma^{(i)}|\Omega|\]
	where
	\[\bfa := \sum_{e=1}^{N_e} |T_e| \bfA\tran\widetilde{\bfe}_e\]
	is the vector that sum the elementwise traces weighted with the corresponding element area.

	\subsection*{Discretization of growth monotonicity \eqref{lmin}}
	The positive semidefinite condition is enforced elementwise.
	It is well known that for a symmetric second order tensor $\bfA$ in any orthonormal basis
	\[\lambda_{min}(\bfA) = \frac{A_{11}+A_{22}}2-\sqrt{\big(\frac{A_{11}-A_{22}}2 \big)^2+A_{12}^2}.\]	Applying this to each element, setting
	$$
	\bfc(\bar\gammab):=-\sum_{e=1}^{N_e}\Big(\frac{\bar \gamma_{3e-2}+\bar \gamma_{3e-1}}2-\sqrt{\big(\frac{\bar \gamma_{3e-2}-\bar \gamma_{3e-1}}2 \big)^2+\frac{\bar \gamma_{3e}^2}{4}}\,
	\Big)\widetilde\bfe_e,
	$$
	yields the discrete version of \eqref{lmin}:
	\begin{equation}\label{dper5}
		\bfc(\bar\gammab^{(i)}-\bar\gammab^{(i-1)})\le \bfzero.
	\end{equation}

	\subsection*{Discretization of the objective function}
	Let us focus first on the penalization term.
	With elementary computations, we deduce
	$$
	\begin{aligned}
		\int_\Omega |\gammab^{(i)} -\gammab^{(i-1)} |^2 \ d\calL^2
		&=
		\sum_{e=1}^{N_e}\int_{T_e}|(\bfe_1\otimes\bar\bfe_{3e-2}+ \bfe_2\otimes\bar\bfe_{3e-1}+ \bfe_3\otimes\bar\bfe_{3e})(\bar\gammab^{(i)}-\bar\gammab^{(i-1)})|^2\,d\calL^2\\
		&=\frac 12 \bfL\,
		[\bar\gammab^{(i)}-\bar\gammab^{(i-1)}]\cdot(\bar\gammab^{(i)}-\bar\gammab^{(i-1)})
	\end{aligned}
	$$
	where we set
	\begin{equation}\label{eqL}
	\bfL:=2\sum_{e=1}^{N_e}|T_e|(\bar\bfe_{3e-2}\otimes\bar\bfe_{3e-2}+ \bar\bfe_{3e-1}\otimes\bar\bfe_{3e-1}+ \bar\bfe_{3e}\otimes\bar\bfe_{3e}).
	\end{equation}
	We note in passing that $\bfL$ is diagonal and positive definite, thus invertible.

	We now discretize the objective functions.

	\subsection*{Perimeter} Consider as objective function  the perimeter of the deformed configuration $(\mathbf{id} + \bfu^{(i)})(\Omega)$:
	\[\Phi(\bfu^{(i)}) = P((\mathbf{id} + \bfu^{(i)})(\Omega)).\]
	Let
	$$
	I_b:=\{i\in\{1,2,\ldots,N\}:\bfx_i\in\partial\Omega\},
	\qquad
	N_b:=|I_b|
	$$
	and $b:\{1,2,\ldots,N_b,N_b+1\}\to I_b$ a function such that $b(N_b+1)=b(1)$ and
	$$\bfx_{b(i)}\quad\mbox{and}\quad \bfx_{b(i+1)} \quad\mbox{ belong to the same element for }i=1,\ldots,N_b.
	$$
	Then, setting
	$$
	\bfy_{b(j)}:=\bfx_{b(j)}+\widehat u^{(i)}_{2b(j)-1}\bfe_1+\widehat u^{(i)}_{2b(j)}\bfe_2,
	$$
	we have that the perimeter can be expressed as
	$$
	\begin{aligned}
		P_u(\widehat\bfu^{(i)})&:= \sum_{j=1}^{N_b}|\bfy_{b(j+1)}-\bfy_{b(j)}|.
	\end{aligned}
	$$

	Let us notice that
	$$
	\begin{aligned}
		\frac{\partial P_u}{\partial \widehat u^{(i)}_{2b(j)-1}} (\widehat\bfu^{(i)})&= -\frac{\bfy_{b(j+1)}-\bfy_{b(j)}}{|\bfy_{b(j+1)}-\bfy_{b(j)}|}\cdot \bfe_1+\frac{\bfy_{b(j)}-\bfy_{b(j-1)}}{|\bfy_{b(j)}-\bfy_{b(j-1)}|}\cdot \bfe_1\\
		\frac{\partial P_u}{\partial \widehat u^{(i)}_{2b(j)}} (\widehat\bfu^{(i)})&= -\frac{\bfy_{b(j+1)}-\bfy_{b(j)}}{|\bfy_{b(j+1)}-\bfy_{b(j)}|}\cdot \bfe_2+\frac{\bfy_{b(j)}-\bfy_{b(j-1)}}{|\bfy_{b(j)}-\bfy_{b(j-1)}|}\cdot \bfe_2\\
	\end{aligned}
	$$
	and thus
	$$
	\nabla P_u=  \sum_{j=1}^{N_b}\frac{\partial P_u}{\partial \widehat u^{(i)}_{2b(j)-1}}\widehat\bfe_{2b(j)-1}+\frac{\partial P_u}{\partial \widehat u^{(i)}_{2b(j)}}\widehat\bfe_{2b(j)}.
	$$

	If $\widehat\bfu$ solves \eqref{dper4}, i.e., $\widehat\bfu = \bfK\inv(\bfB\,\bar\gammab^{(i)} + \bff)$, we define
	$$
	P_g(\bar\gammab^{(i)}):=P_u(\bfK\inv(\bfB\,\bar\gammab^{(i)} + \bff)).
	$$
	Since
	$$
	\frac{\partial \widehat\bfu^{(i)}}{\partial \bar\gamma^{(i)}_k}= \bfK\inv\bfB\,\bar\bfe_k
	$$
	we have
	$$
	\frac{\partial P_g}{\partial \bar\gamma^{(i)}_k}= \nabla P_u\cdot\bfK\inv\bfB\,\bar\bfe_k=\bfB\tran\bfK\inv\nabla P_u\cdot\bar\bfe_k
	$$
	and therefore
	$$
	\nabla P_g=\bfB\tran\bfK\inv\nabla P_u.
	$$
	The latter expression is the gradient of the perimeter with respect to $\bar\gammab^{(i)}$.

	We now show that $P_g$ is not strictly convex as a function of $\bar\gammab^{(i)}$.
	\begin{lemma}
		The map $P_g : \mathbb{R}^{3N_e} \to \mathbb{R}$,
		$\bar{\boldsymbol{\gamma}}^{(i)} \mapsto P_g\big(\bar{\boldsymbol{\gamma}}^{(i)}\big)$,
		is convex but not strictly convex.
	\end{lemma}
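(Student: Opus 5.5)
The plan is to write $P_g$ as a composition of an affine map with a convex function, and then to find a direction along which it is affine.

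\textbf{Convexity.} By definition
$$
P_g(\bar\gammab)=P_u\big(\bfK\inv(\bfB\bar\gammab+\bff)\big),
$$
and $\bar\gammab\mapsto \bfK\inv(\bfB\bar\gammab+\bff)$ is affine. For each boundary edge, $\bfy_{b(j+1)}-\bfy_{b(j)}$ is an affine function of $\widehat\bfu$, namely $\bfx_{b(j+1)}-\bfx_{b(j)}$ plus a linear combination of the components of $\widehat\bfu$. The Euclidean norm is convex, so $\widehat\bfu\mapsto|\bfy_{b(j+1)}-\bfy_{b(j)}|$ is convex. A finite sum of convex functions is convex, so $P_u$ is convex on $\R^{2N}$. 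A convex function composed with an affine map is convex, hence $P_g$ is convex. (If $\bfK$ is singular on $\R^{2N}$, read $\bfK\inv$ as the inverse on $\widehat V^h_0$, as the paper does; the map is still linear.)

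\textbf{Failure of strict convexity.} It suffices to exhibit $\bar\gammab_0$ and $\bfg\neq\bfzero$ with $t\mapsto P_g(\bar\gammab_0+t\bfg)$ affine on an interval. Two routes are available.

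(a) Kernel route. If $\bfB$ has a nontrivial kernel, $P_g$ is constant along it. This holds by counting whenever $3N_e>\dim \widehat V^h_0$. The cleaner choice is incompatible-but-stress-free directions: take $\bfg$ equal to the strain vector of some displacement $\bfv\in\calU^h_0$, i.e.\ elementwise $\bfD\widehat\bfv$. Then $\bfB\bfg=\bfK\widehat\bfv$, so the displacement shifts by $\widehat\bfv$ and is not constant in general. This does not give $\bfB\bfg=\bfzero$, so it is not the right tool; instead I would simply use $\dim\ker\bfB\ge 3N_e-2N>0$, which holds for standard triangulations since $N_e\approx 2N$ and so $3N_e>2N$.

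(b) Scaling route, if one wants a mesh-independent argument. Take $\bar\gammab_0$ such that every boundary edge vector $\bfy_{b(j+1)}-\bfy_{b(j)}$ is nonzero, and take $\bfg$ corresponding to uniform isotropic growth ($\bar\gamma_{3e-2}=\bar\gamma_{3e-1}=1$, $\bar\gamma_{3e}=0$) in the free-body case. Uniform isotropic growth is compatible, so the displacement is affine in $t$ with $\bfy_{b(j)}(t)=\bfy_{b(j)}+t\,\bfx_{b(j)}$ up to a rigid motion. Along a segment where each edge vector only rescales positively, each term $|\cdot|$ is linear in $t$, so $P_g$ is affine there.

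I will present route (a) as the main argument because it is general. The main obstacle is justifying $\ker\bfB\neq\{\bfzero\}$ rigorously. I would do this by the dimension count $\operatorname{rank}\bfB\le\dim\widehat V^h_0\le 2N<3N_e$, using Euler's formula for triangulations ($N_e\ge N-2$, so $3N_e\ge 2N$ except in tiny meshes). Any nonzero $\bfg\in\ker\bfB$ then gives $P_g(\bar\gammab+t\bfg)=P_g(\bar\gammab)$ for all $t$, which contradicts strict convexity.
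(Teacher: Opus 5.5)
Your proof is correct, and its convexity half is the same as the paper's: the paper just writes out the triangle inequality for $\sum_j|\bfc_j+(\bfC_{j+1}-\bfC_j)\bfK\inv\bfB\bar\gammab|$.

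For the failure of strict convexity you take a different, and more robust, route. The paper notes that each edge operator $(\bfC_{j+1}-\bfC_j)\bfK\inv\bfB:\R^{3N_e}\to\R^2$ is non-injective because $3N_e>2$. That count needs no assumption on the mesh, but by itself it does not finish the argument. Non-injectivity of every summand separately does not give a single pair $\bar\gammab_1\neq\bar\gammab_2$ at which all the triangle inequalities are equalities at once. For instance, $\sqrt{1+x^2}+\sqrt{1+y^2}$ is strictly convex on $\R^2$, although each summand is the norm of an affine map with nontrivial kernel.

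What is needed is one direction annihilated by all edge maps simultaneously, and your global count supplies exactly that. Since $\bfB$ takes values in $\widehat V^h_0$, any $\bfg\neq\bfzero$ in $\ker\bfB$ leaves $\widehat\bfu$, and hence every boundary edge vector, unchanged. So $P_g$ is constant on the whole line $\bar\gammab+t\bfg$. Mechanically, these are growth fields whose piecewise-constant eigenstress produces zero nodal forces: they create stress but no displacement.

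The price is the hypothesis $3N_e>\dim\widehat V^h_0$, which you should state as a strict inequality (your Euler remark only gives $3N_e\ge 2N$). Using $N_e=2N-N_b-2$ for a triangulated simply connected polygon, $3N_e>2N$ holds as soon as the mesh has an interior node or more than six boundary nodes. It holds by a wide margin for the meshes of Table~1. You can weaken the hypothesis further: $P_g$ depends on $\bar\gammab$ only through the edge vectors of the closed boundary polygon, and these lie in a subspace of dimension $2N_b-2$ because they sum to zero. Then $3N_e>2N_b-2$ already suffices, and this count is inconclusive only for meshes of one or two triangles.

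Route (b) is not needed, and as written it has a gap. With $\mathbf{z}_j$ the $j$-th edge vector at $\bar\gammab_0$, the edge vector at $\bar\gammab_0+t\bfg$ is $\mathbf{z}_j+t(\bfx_{b(j+1)}-\bfx_{b(j)})$. This is a positive multiple of $\mathbf{z}_j$ only if $\mathbf{z}_j$ is parallel to the reference edge, e.g.\ at $\bar\gammab_0=\bfzero$ with $\bff=\bfzero$; being nonzero is not enough.

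Also, ``up to a rigid motion'' is not harmless in linearized kinematics. Suppose the isostatic constraints force a rotational correction $\omega=\kappa t$ with $\kappa\neq 0$. Then a reference edge $\mathbf{d}$ becomes $(1+t)\mathbf{d}+\kappa t\,\mathbf{W}\mathbf{d}$, with length $\sqrt{(1+t)^2+\kappa^2t^2}\,|\mathbf{d}|$, which is strictly convex in $t$. The scaling route therefore works only when the constraints admit the dilatation about the pinned point without correction. An example is pinning $(0,0)$ and blocking only the vertical displacement of $(\ell,0)$.
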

	\begin{proof}
		In this proof, we omit superscript $^{(i)}$ for notational convenience. For  $j=1,\dots,N_b$ we introduce the linear operators $\bfC_j:\bbR^{2N} \to \bbR^2$ defined by
		\[\bfC_j := \bfe_1\otimes \bfe_{2b(j)-1} + \bfe_2\otimes\bfe_{2b(j)}.\]
		A straightforward computation yields
		\[
		P_g(\bar\gammab) =
		\sum_{j=1}^{N_b} |		\bfx_{b(j+1)} - 		\bfx_{b(j)} +(\bfC_{j+1} - \bfC_{j})\bfK\inv\bff +(\bfC_{j+1} - \bfC_{j})\bfK\inv\bfB\bar\gammab|,
		\]
		and for notational simplicity we define $\bfc_j := \bfx_{b(j+1)} - 		\bfx_{b(j)} +(\bfC_{j+1} - \bfC_{j})\bfK\inv\bff$.
		With this notation, we first show that $P_g:\bbR^{3N_e}\to\bbR$ is convex.
		Indeed, for any pair $\bar\gammab_1,\bar\gammab_2 \in \bbR^{3N_e}$ and for all $t\in[0,1]$ we have
		\[\begin{aligned}
			P_g(t \bar\gammab_1 + (1-t)\bar\gammab_2) &= \sum_{j=1}^{N_b} |		\bfc_j +(\bfC_{j+1} - \bfC_{j})\bfK\inv\bfB(t \gammab_1 + (1-t)\gammab_2)|\\
			&= \sum_{j=1}^{N_b} |		t\bfc_j + (1-t)\bfc_j +(\bfC_{j+1} - \bfC_{j})\bfK\inv\bfB(t \gammab_1 + (1-t)\gammab_2)|\\
			&\leq t \sum_{j=1}^{N_b} |		\bfc_j  +(\bfC_{j+1} - \bfC_{j})\bfK\inv\bfB \gammab_1| + (1-t) \sum_{j=1}^{N_b} |		\bfc_j  +(\bfC_{j+1} - \bfC_{j})\bfK\inv\bfB \gammab_2|\\
			&= t P_g(\bar\gammab_1) + (1-t) P_g(\bar\gammab_2)
		\end{aligned}\]
		where the inequality follows from the triangle inequality for the Euclidean norm.
		For strict convexity, we must show that the equality sign in the above inequality holds only if $\bar\gammab_1=\bar\gammab_2$ for $t\in(0,1)$. Indeed, the injectivity of the operators $(\bfC_{j+1} - \bfC_{j})\bfK\inv\bfB $ (for all $j$) would be required:
		\[\bar\gammab_1\neq\bar\gammab_2 \implies (\bfC_{j+1} - \bfC_{j})\bfK\inv\bfB \bar\gammab_1 \neq (\bfC_{j+1} - \bfC_{j})\bfK\inv\bfB \bar\gammab_2.\]
		However, operators $(\bfC_{j+1} - \bfC_{j})\bfK\inv\bfB: \bbR^{3N_e} \to \bbR^2$ cannot be injective since $3N_e > 2$. \QED
	\end{proof}

	\subsection*{External work}
	We consider now the case
	\[\Phi = \Phi(\bfu^{(i)}) = \int_{\partial_N\Omega} \bfu^{(i)}\cdot \bfp \,d\calL^1\]
	representing the work done by the external loads. Minimizing external work corresponds to increasing structural stiffness. By reducing this quantity, the body distributes growth to oppose deformation.

	Reasoning as before, we deduce easily
	\[\Phi = \bff \cdot \widehat{\bfu}^{(i)}, \qquad \frac{\partial \Phi}{\partial \bar\gammab^{(i)}} = \bfB\tran \bfK\inv\bff. \]


	\section{Numerical experiments}\label{secnum}
	Under the finite element discretization introduced in Section \ref{secfem}, problems \eqref{workopt22cont} and \eqref{peropt22cont} read in their discrete approximation as follows:
	\begin{equation}\label{workopt22}
	\begin{cases}
		\displaystyle \min_{\widehat\bfu^{(i)}\in\widehat{V}_0^h\atop \bar\gammab^{(i)}\in\bbR^{3N_e}} & \bff\cdot\widehat\bfu^{(i)} + \frac12 \frac{1}{2\tau} \bfL (\bar\gammab^{(i)}-\bar\gammab^{(i-1)})\cdot (\bar\gammab^{(i)}-\bar\gammab^{(i-1)}),\\
		& \displaystyle \bfK\, \widehat\bfu^{(i)}=\bfB\,\bar\gammab^{(i)} + \bff,\\
		& \displaystyle \bfa\cdot(\bar\gammab^{(i)}-\bar\gammab^{(i-1)})  = \Gamma|\Omega|,\\
		& \displaystyle \bfc(\bar\gammab^{(i)}-\bar\gammab^{(i-1)})\le \bfzero
	\end{cases}
	\end{equation}
	for the beams and
	\begin{equation}\label{peropt22}
\begin{cases}
	\displaystyle \min_{\widehat\bfu^{(i)}\in\widehat{V}_0^h\atop \bar\gammab^{(i)}\in\bbR^{3N_e}} & P_u(\widehat\bfu^{(i)}) + \frac12 \frac{1}{2\tau} \bfL (\bar\gammab^{(i)}-\bar\gammab^{(i-1)})\cdot (\bar\gammab^{(i)}-\bar\gammab^{(i-1)}),\\
	& \displaystyle \bfK\, \widehat\bfu^{(i)}=\bfB\,\bar\gammab^{(i)},\\
	& \displaystyle  \bfA\,(\bar\gammab^{(i)}-\bar\gammab^{(i-1)})= \widetilde \bfa,\\
	& \displaystyle \bfc(\bar\gammab^{(i)}-\bar\gammab^{(i-1)})\le \bfzero.
\end{cases}
	\end{equation}
	for the perimeter case.

	Table \ref{table:1} summarizes the numerical values of the parameters used in the simulations.
	\begin{table}[h!]
		\centering
		\begin{tabular}{|c| c| c| c|}
			Symbol & Cantilever & Double clamped & Perimeter\\ [0.5ex]
			\hline
			$E$ & 1 & 1 & 1 \\
			$\nu$ & 0  & 0 & 0\\
			$\ell$ & 1 & 1 & 1 \\
			$h$ & 0.1 & 0.1 & 0.5 \\
			$p$ & 5e-4 & 5e-3 & - \\
			$\Gamma$ & 0.05 & 0.05 & 0.024 \\
			$\frac{1}{2\tau}$ & 10 & 10 & 100 \\
			$N$ & 306 & 306 & 715\\
			$N_e$ & 498 & 498 & 1319\\
			$N_{iter}$ & 30& 30 & 500\\
		\end{tabular}
		\caption{Numerical values of parameters involved in numerical simulations}
		\label{table:1}
	\end{table}
	The constrained optimization problems were solved using MATLAB’s \emph{fmincon} routine with the SQP algorithm. Default algorithmic parameters were used, except \emph{MaxIterations}=100, \emph{OptimalityTolerance} = \emph{StepTolerance} = 1e-4 to speed up the overall convergence.
	Analytical gradients of both the objective function and the nonlinear constraints were supplied to improve convergence.

	The mesh generation and handling relies on the open source package \emph{DistMesh} \cite{Persson2004}, while the finite element discretization has been performed through an in-house implementation.

	Some meaningful pictures have been reported in Fig. \ref{defd}, \ref{g11d}, \ref{g22d}, \ref{t11d}, \ref{t011d}, \ref{vmd}.
	\begin{figure}[!hbt]
		\centering
		\subfloat[Reference configuration]{\includegraphics[width=0.3\columnwidth]{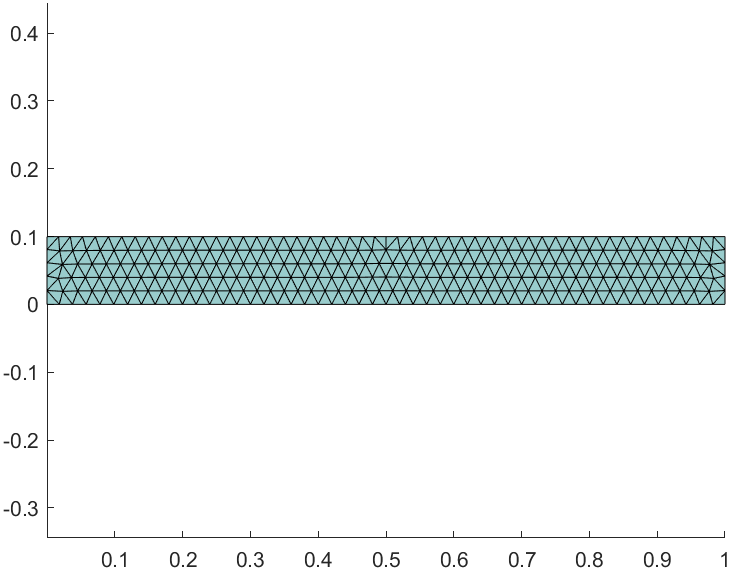}}\hfill
		\subfloat[Deformed, grown configuration after 15 timesteps]{\includegraphics[width=0.3\columnwidth]{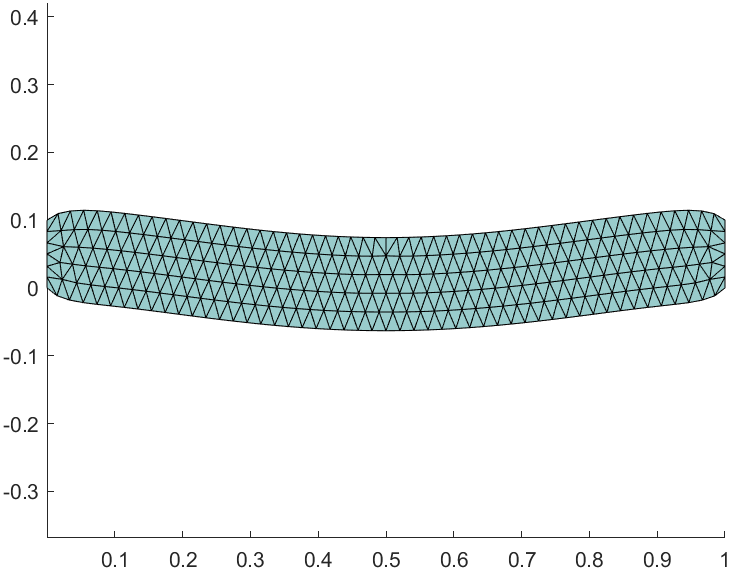}}\hfill
		\subfloat[Deformed, grown configuration after 30 timesteps]{\includegraphics[width=0.3\columnwidth]{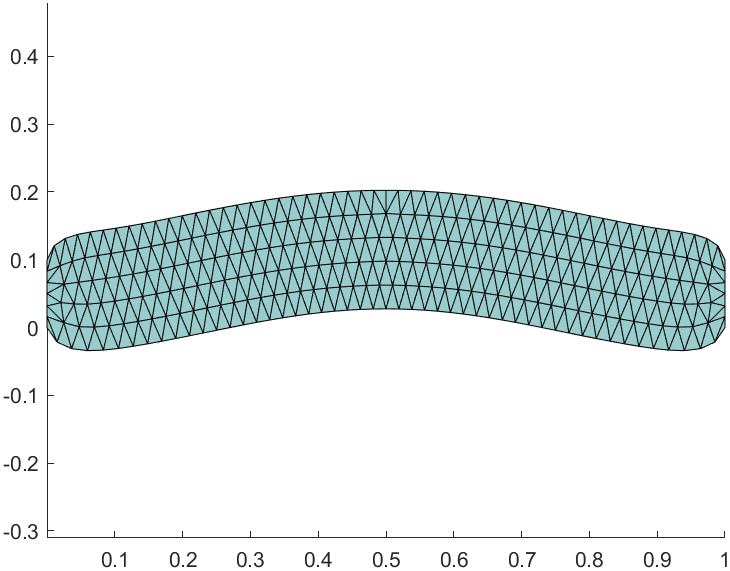}}\\
		\subfloat[Reference configuration]{\includegraphics[width=0.3\columnwidth]{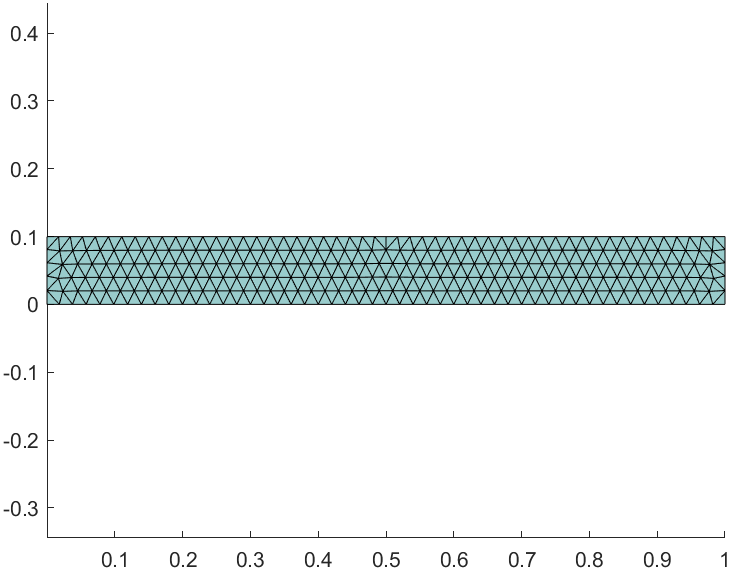}}\hfill
		\subfloat[Deformed, grown configuration after 15 timesteps]{\includegraphics[width=0.3\columnwidth]{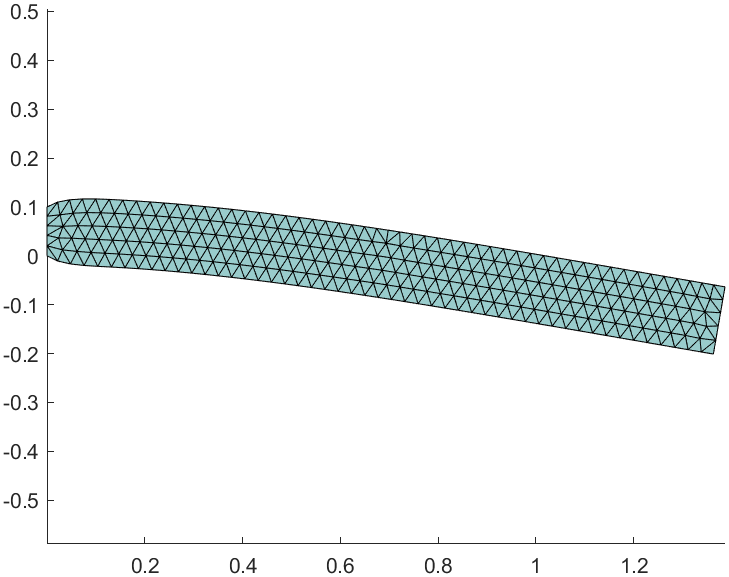}}\hfill
		\subfloat[Deformed, grown configuration after 30 timesteps]{\includegraphics[width=0.3\columnwidth]{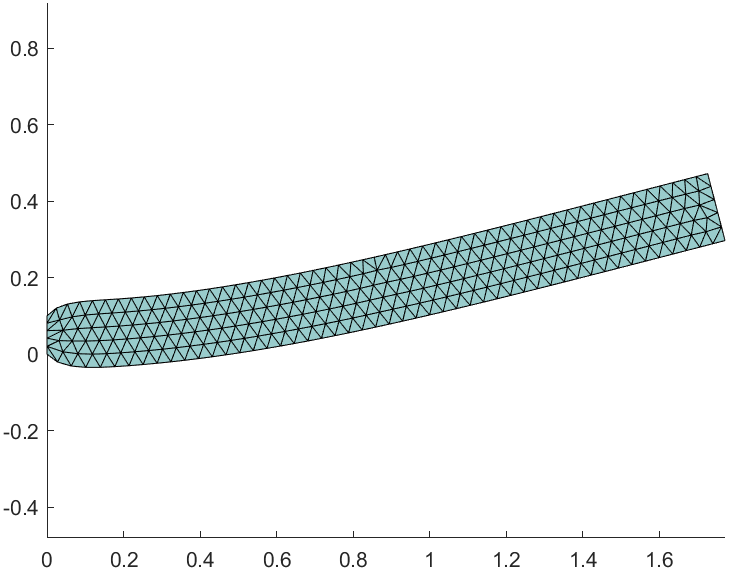}}\\
		\subfloat[Reference configuration]{\includegraphics[width=0.3\columnwidth]{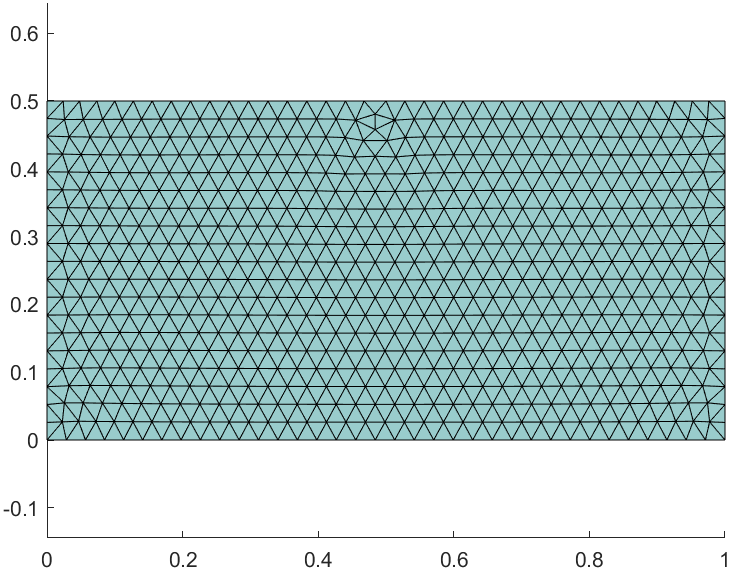}}\hfill
		\subfloat[Grown configuration after 50 timesteps]{\includegraphics[width=0.3\columnwidth]{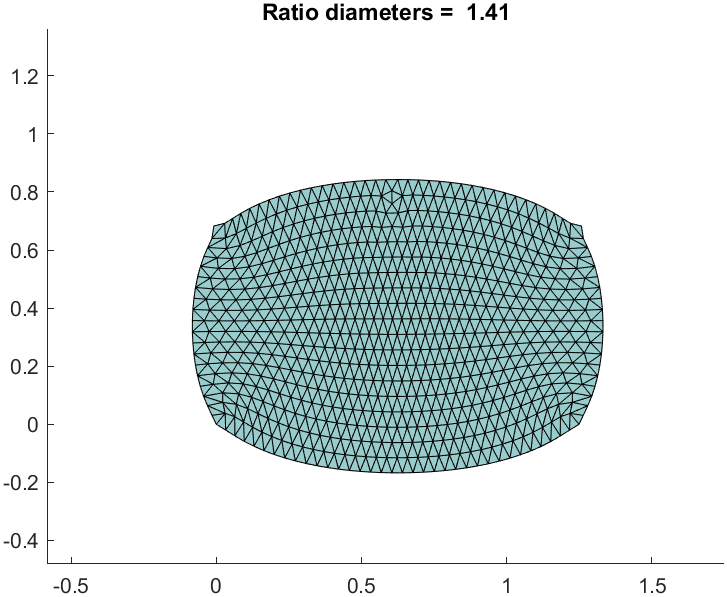}}\hfill
		\subfloat[Grown configuration after 500 timesteps]{\includegraphics[width=0.3\columnwidth]{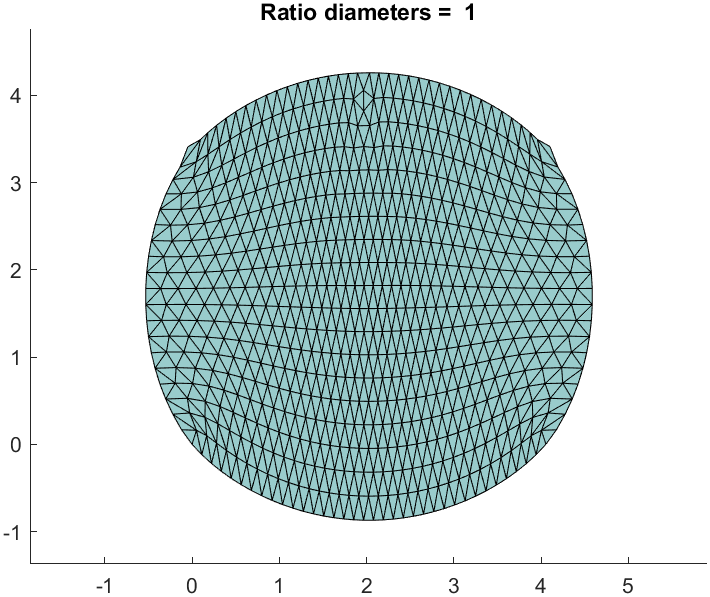}}\\
		\caption{Optimal grown shapes. Doubly-clamped (top), cantilever (center), perimeter (bottom).}
		\label{defd}
	\end{figure}
	In particular, Fig. \ref{defd} shows the deformed configurations for all cases, at an intermediate and at the final step as reported in Tab. \ref{table:1}. For relatively small times the beam configurations are consistent with the purely-elastic response: the external loads overcome the growth effect. Conversely, at later times, the cumulative growth reverse the trend. It is well known that, even in purely elastic settings, stiffness can increase only due to the shape. In the presence of growth, the beams appear to exploit this mechanism, adapting their geometry to increase structural stiffness.
	In the perimeter-driven case, the body evolves toward a circular shape, as expected from the classical isoperimetric principle.

	The fields $(\bfE_g^{(i)})_{11}$, $(\bfE_g^{(i)})_{22}$ are shown in Fig. \ref{g11d}, \ref{g22d}. It is clear that $\bfE_g$ components are higher where material fibers are more elongated in that directions (compare with Fig. \ref{defd}). Conversely, $(\bfE_g^{(i)})_{22}$ is quite constant for the beam cases. It can be appreciated in the perimeter case that $(\bfE_g^{(i)})_{11}$ and $(\bfE_g^{(i)})_{22}$ are ``complementary'': this is due to the fact that the trace of $\bfE_g^{(i)}$ is constant.
	\begin{figure}[!hbt]
		\centering
		\subfloat[$i=15$]{\includegraphics[trim={0 0 0 0.6cm},clip, width=0.45\columnwidth]{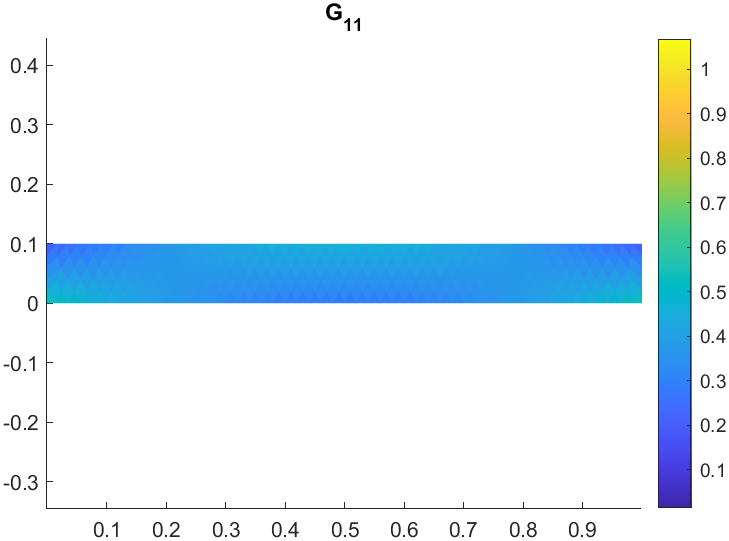}}\hfill
		\subfloat[$i=30$]{\includegraphics[trim={0 0 0 0.6cm},clip,width=0.45\columnwidth]{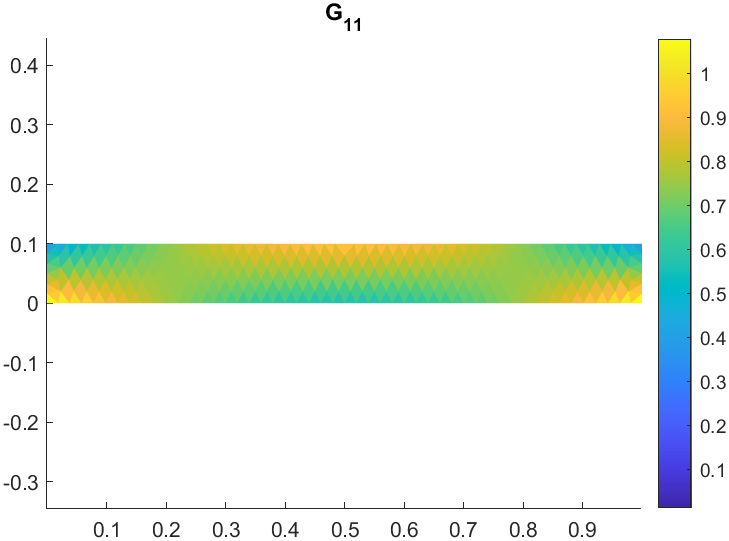}}\\
		\subfloat[$i=15$]{\includegraphics[trim={0 0 0 0.6cm},clip,width=0.45\columnwidth]{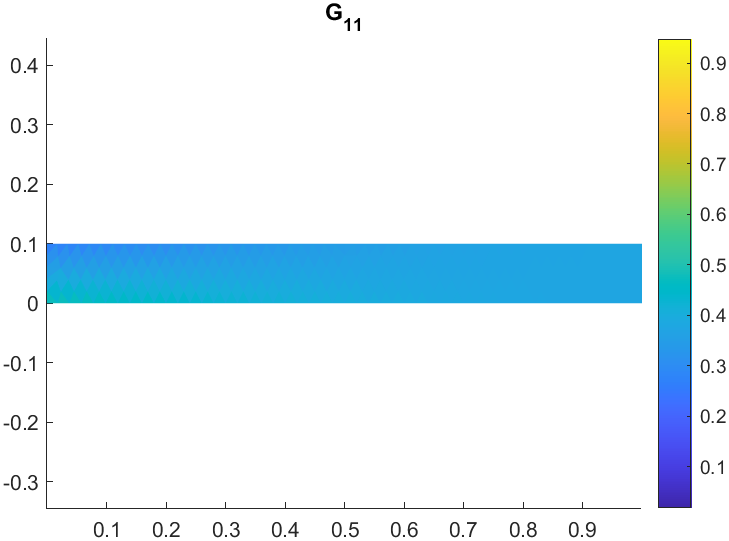}}\hfill
		\subfloat[$i=30$]{\includegraphics[trim={0 0 0 0.6cm},clip,width=0.45\columnwidth]{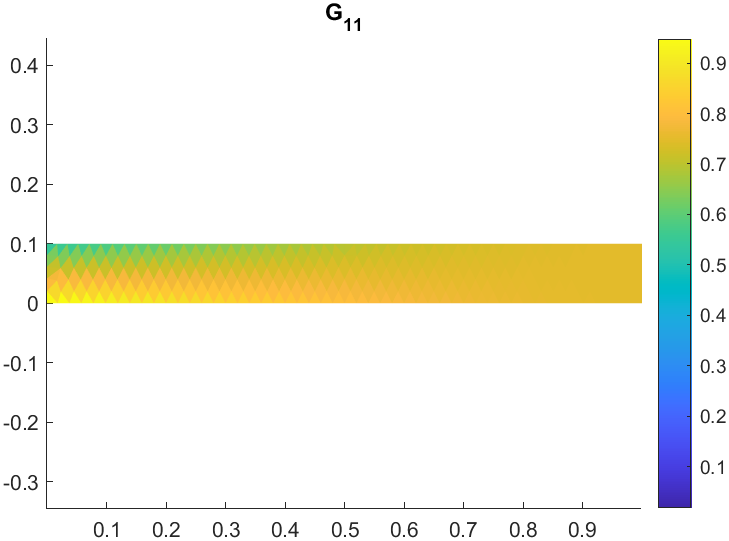}}\\
		\subfloat[$i=50$]{\includegraphics[trim={0 0 0 0.6cm},clip,width=0.45\columnwidth]{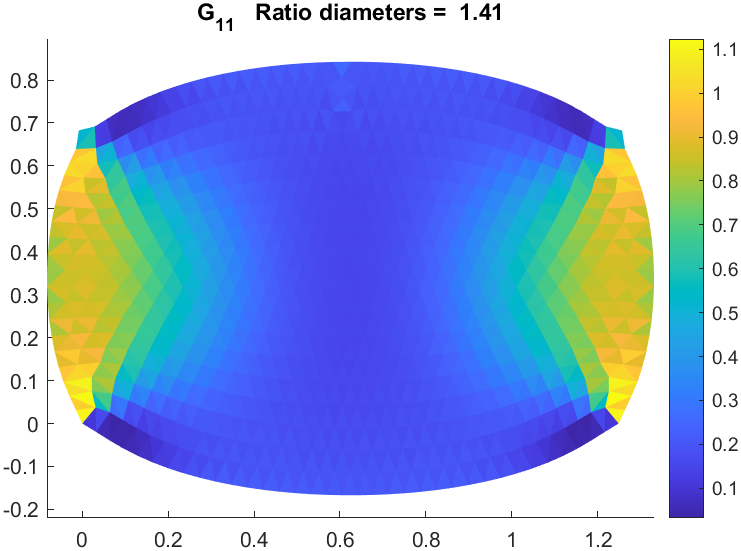}}\hfill
		\subfloat[$i=500$]{\includegraphics[trim={0 0 0 0.6cm},clip,width=0.45\columnwidth]{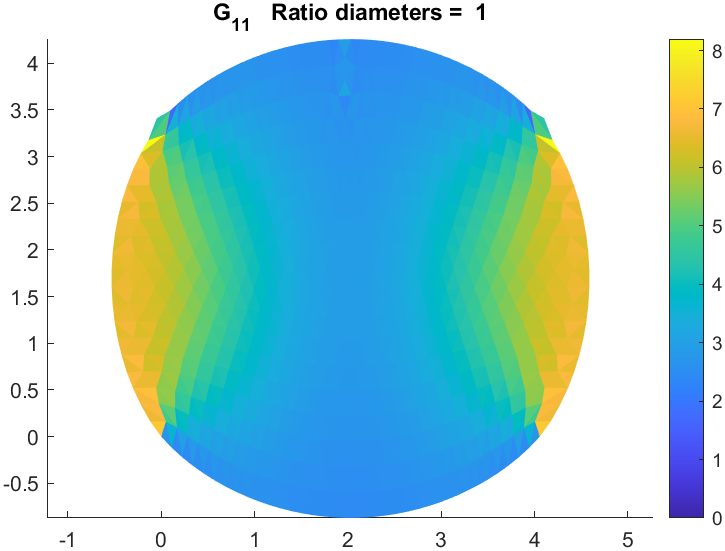}}\\
		\caption{$(\bfE_g^{(i)})_{11}$. Doubly-clamped (top), cantilever (center), perimeter (bottom).}
		\label{g11d}
	\end{figure}

	\begin{figure}[!hbt]
		\centering
		\subfloat[$i=15$]{\includegraphics[trim={0 0 0 0.6cm},clip,width=0.45\columnwidth]{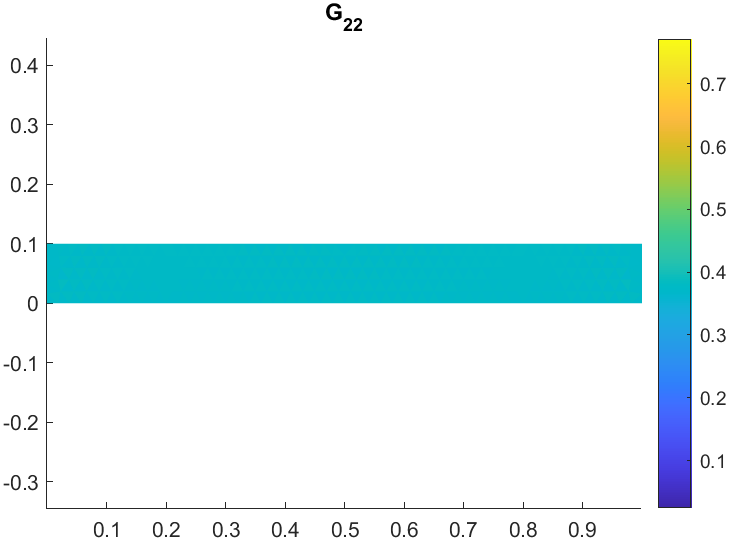}}\hfill
		\subfloat[$i=30$]{\includegraphics[trim={0 0 0 0.6cm},clip,width=0.45\columnwidth]{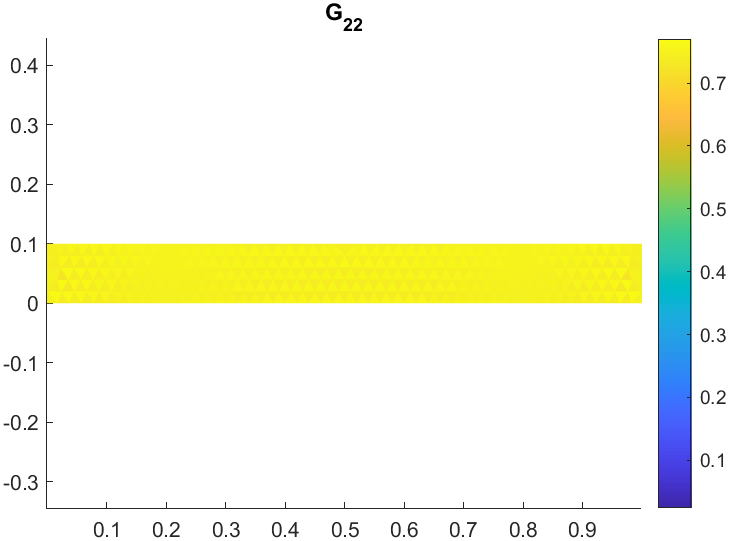}}\\
		\subfloat[$i=15$]{\includegraphics[trim={0 0 0 0.6cm},clip,width=0.45\columnwidth]{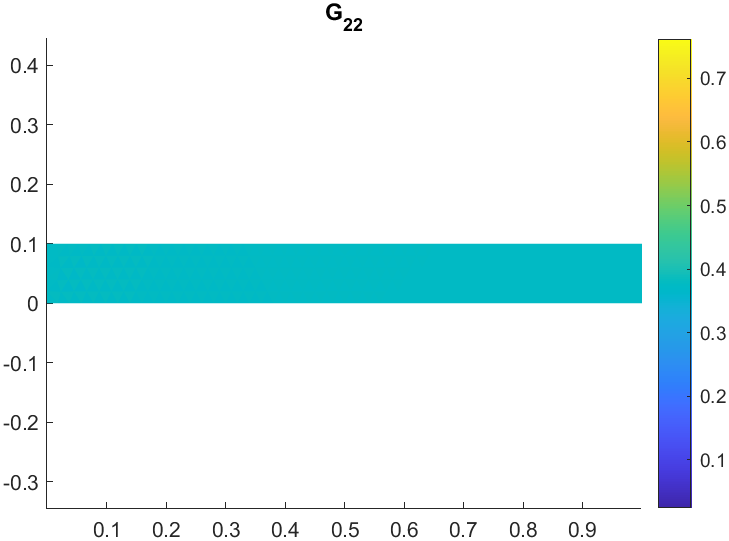}}\hfill
		\subfloat[$i=30$]{\includegraphics[trim={0 0 0 0.6cm},clip,width=0.45\columnwidth]{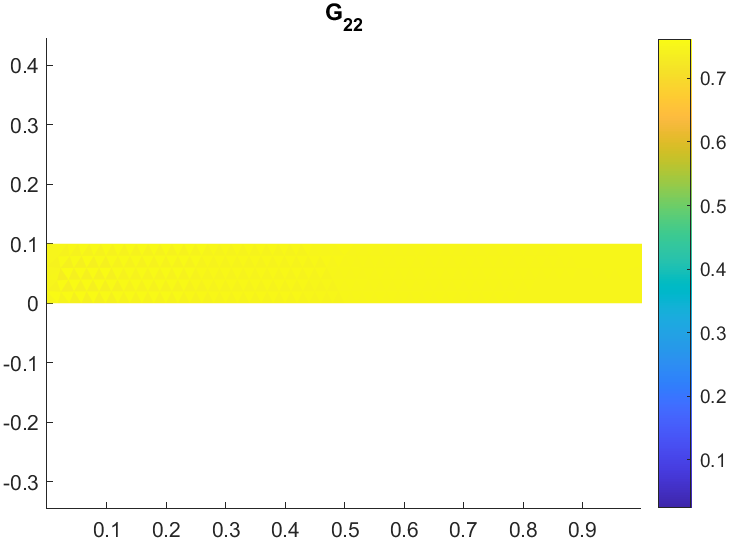}}\\
		\subfloat[$i=50$]{\includegraphics[trim={0 0 0 0.6cm},clip,width=0.45\columnwidth]{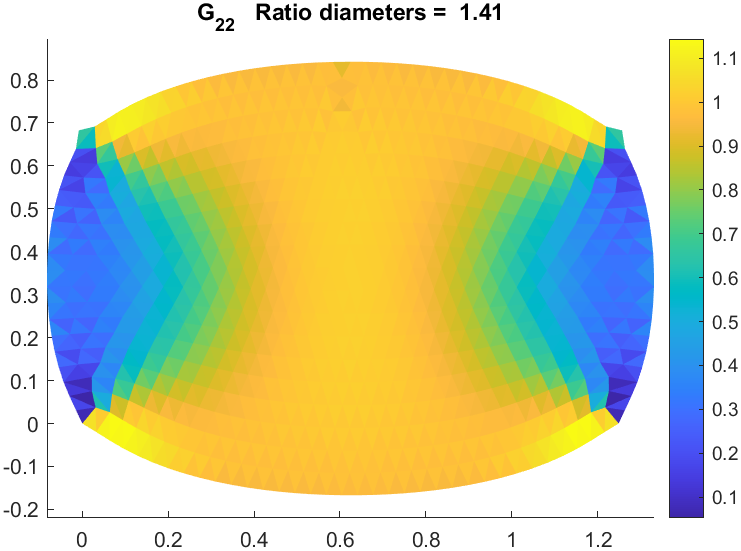}}\hfill
		\subfloat[$i=500$]{\includegraphics[trim={0 0 0 0.6cm},clip,width=0.45\columnwidth]{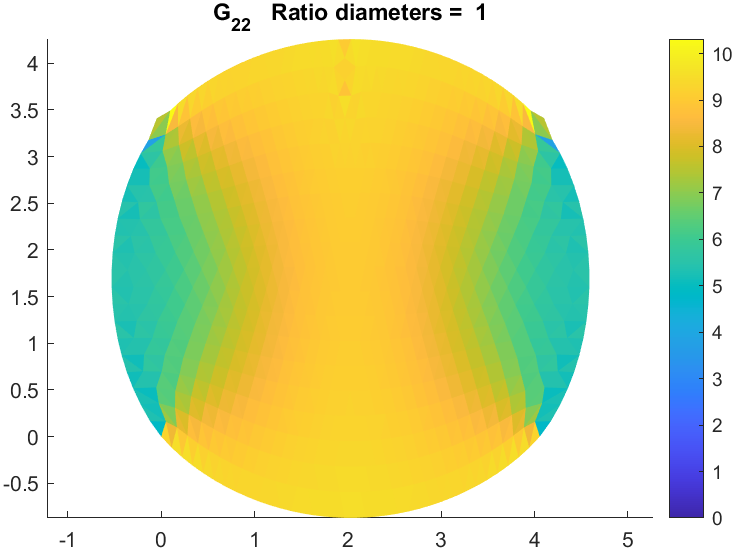}}
		\caption{$(\bfE_g^{(i)})_{22}$ Doubly-clamped (top), cantilever (center), perimeter (bottom).}
		\label{g22d}
	\end{figure}
	The longitudinal Cauchy stress component $(\bfT^{(i)})_{11}$ is shown in Fig. \ref{t11d} for the beam cases only. In the doubly-clamped configuration, the stress is predominantly compressive. In contrast, in the cantilever case, the beam can accommodate growth through elongation (see again Fig. \ref{defd}), thereby reducing the stress magnitude.
	\begin{figure}[!hbt]
		\centering
		\subfloat[$i=15$]{\includegraphics[trim={0 0 0 0.6cm},clip,width=0.45\columnwidth]{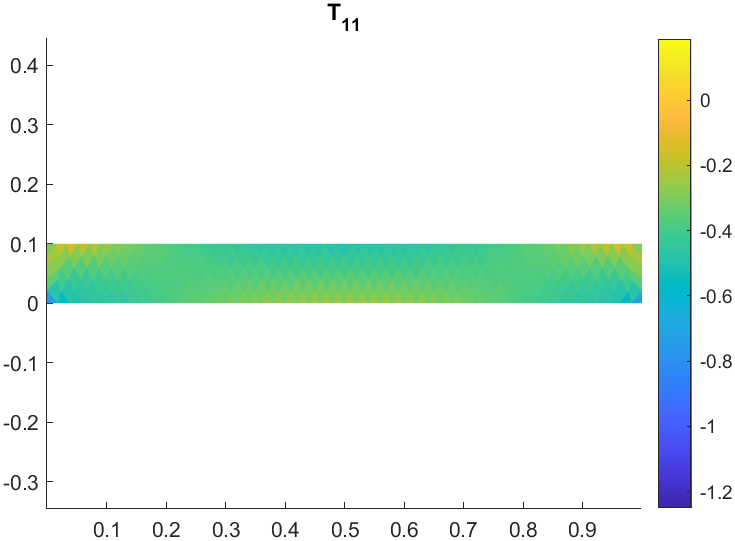}}\hfill
		\subfloat[$i=30$]{\includegraphics[trim={0 0 0 0.6cm},clip,width=0.45\columnwidth]{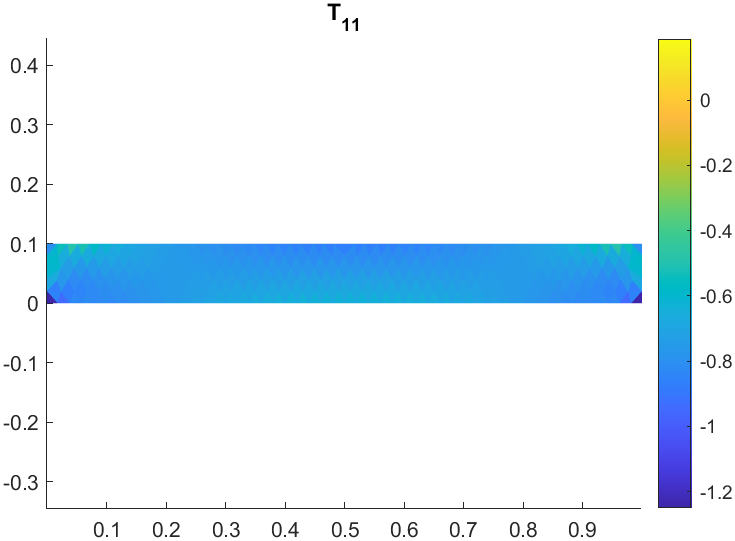}}\\
		\subfloat[$i=15$]{\includegraphics[trim={0 0 0 0.6cm},clip,width=0.45\columnwidth]{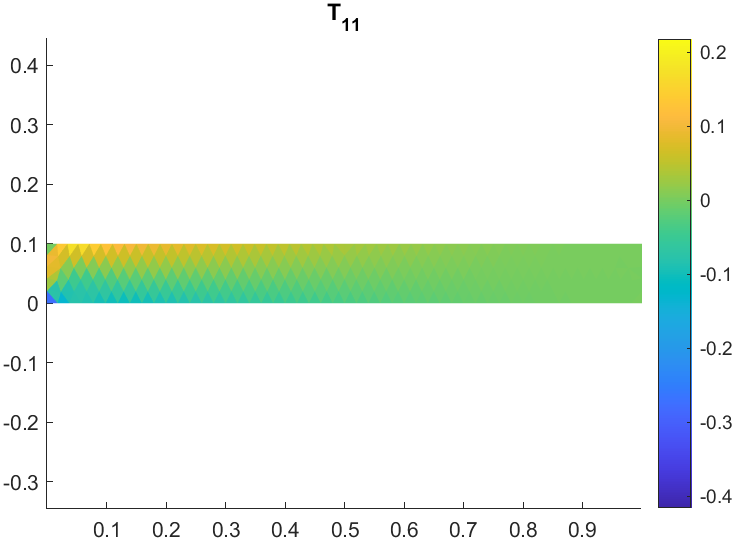}}\hfill
		\subfloat[$i=30$]{\includegraphics[trim={0 0 0 0.6cm},clip,width=0.45\columnwidth]{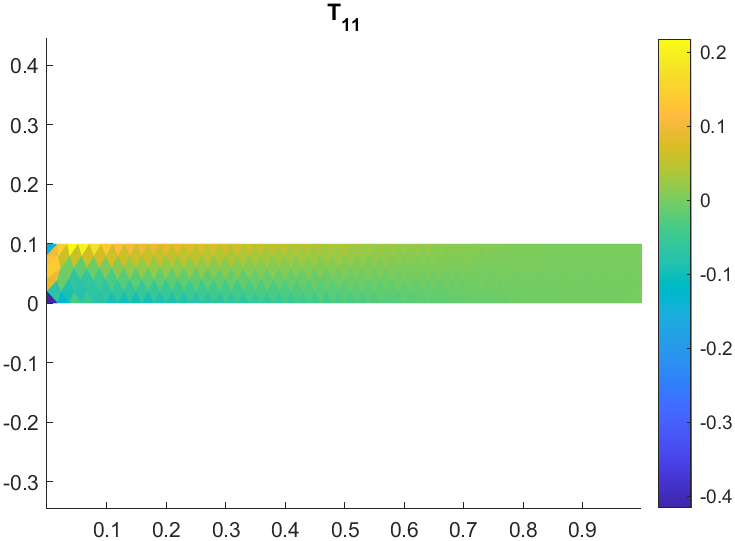}}\\
		\caption{$\bfT^{(i)}_{11}$ Doubly-clamped (top), cantilever (bottom)}
		\label{t11d}
	\end{figure}

	Fig. \ref{t011d} displays the residual stress (longitudinal component), i.e., the stress field when external loads are removed after growth. More precisely, the residual stress tensor is $\bfT_0^{(i)} := \bbC(\bfE(\bfu^{(i)}_0)-\bfE^{(i)}_g)$ where $\bfu_0$ solves
	\[\int_\Omega \bbC [\bfE(\bfu^{(i)}_0) - \bfE_g^{(i)}]\cdot \bfE(\phib) \ d\calL^2 = 0 \qquad \forall \phib \in\calU_0.\]

	Except for localized boundary effects, the residual stress is nearly uniform throughout the beam core.
	\begin{figure}[!hbt]
		\centering
		\subfloat[$i=15$]{\includegraphics[trim={0 0 0 0.6cm},clip,width=0.45\columnwidth]{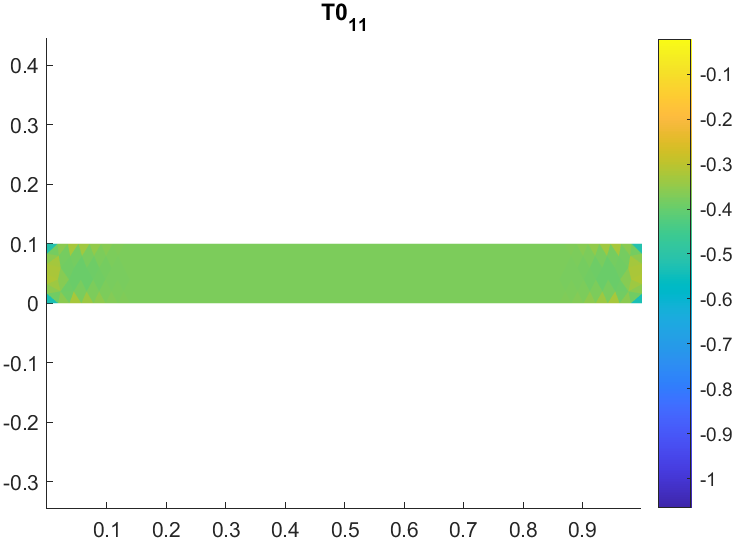}}\hfill
		\subfloat[$i=30$]{\includegraphics[trim={0 0 0 0.6cm},clip,width=0.45\columnwidth]{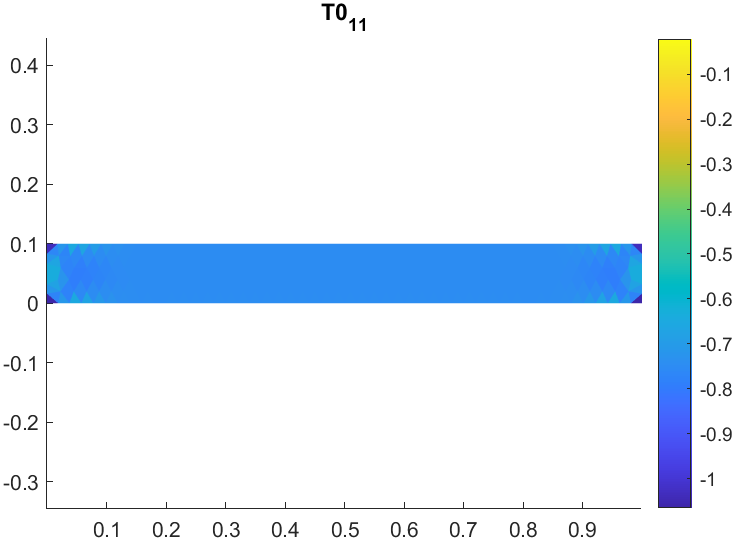}}\\
		\subfloat[$i=15$]{\includegraphics[trim={0 0 0 0.6cm},clip,width=0.45\columnwidth]{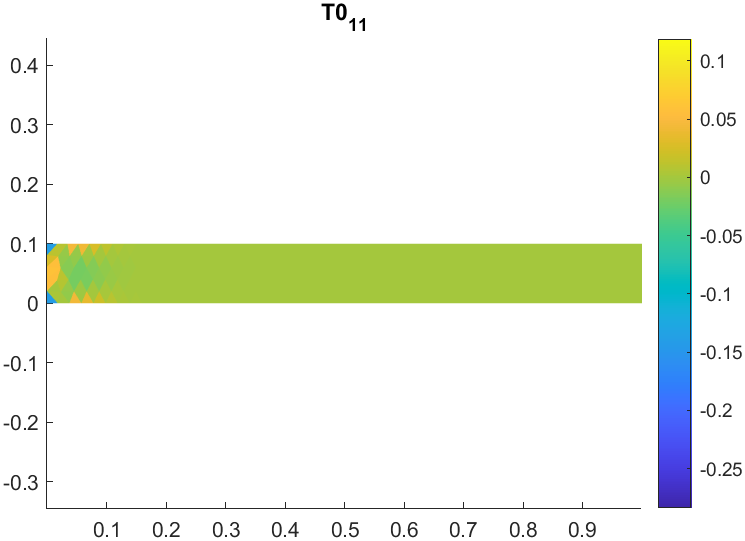}}\hfill
		\subfloat[$i=30$]{\includegraphics[trim={0 0 0 0.6cm},clip,width=0.45\columnwidth]{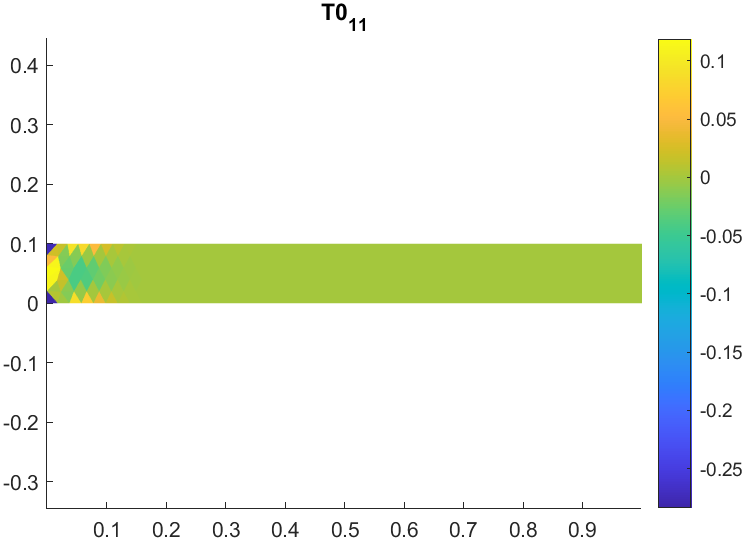}}
		\caption{Residual longitudinal stress. Doubly-clamped (top) and cantilever (bottom)}
		\label{t011d}
	\end{figure}

	For the perimeter case, we report the radial and tangential (hoop) stresses in Fig. \ref{vmd}. Since in this case growth happens without external loads, these can be interpreted as residual stress, as well. It is evident that the material is more stressed in an annular region near the boundary, while the core is definitely less stressed. In particular, the core is compressed while the outer region is under tension. Since the radial stress must vanish on the boundary (no external loads), the tensile stress near the edge is due to the tangential component.

	Interestingly, a similar stress distribution has been observed experimentally and numerically in murine and human tumors \cite{Stylianopoulos2012, Ambrosi2017}: they observed a compressive residual stress in the kernel and a tensile residual stress in the outer shell of the tumor.

	Finally, we remark that each finite element in our discretization may be interpreted as representing a biological cell within a growing tissue.
	\begin{figure}[!hbt]
		\centering
		\subfloat[radial stress $i=500$]{\includegraphics[trim={0 0 0 0.6cm},clip,width=0.45\columnwidth]{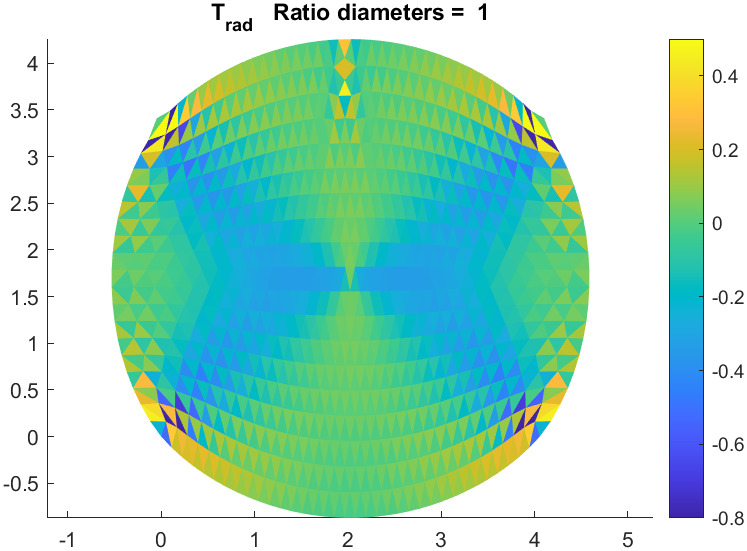}}\hfill
		\subfloat[tangential stress $i=500$]{\includegraphics[trim={0 0 0 0.6cm},clip,width=0.45\columnwidth]{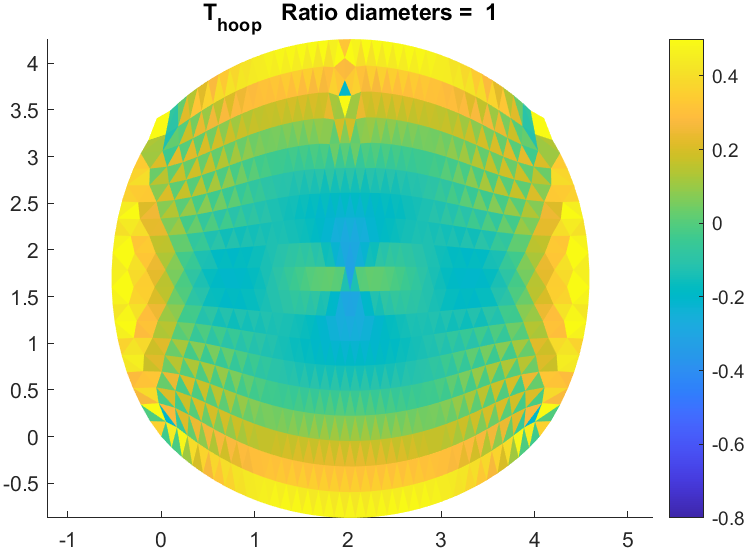}}
		\caption{Residual radial and tangential  stress components, perimeter case}
		\label{vmd}
	\end{figure}

	\section{Approximate analytical solution}\label{secana}

	In this section we derive a closed-form approximation of the discrete growth evolution introduced in Sections \ref{secfem} and \ref{secnum} by temporarily neglecting the nonlinear eigenvalue constraint \eqref{dper5} enforcing accretion in \eqref{workopt22} and \eqref{peropt22}.
	The goal is to make explicit the algebraic structure of the incremental problem and to reveal its interpretation as a projected gradient descent. The analytical solution derived here therefore applies, in principle, in the regime where the monotonicity constraint is either inactive or automatically satisfied. One can verify a posteriori whether the analytical solution satisfies \eqref{dper5}.

	We treat both the perimeter and external work cases in a unified way by writing the objective functional as $\Phi = \Phi(\widehat \bfu^{(i)})$. The discrete minimization problem reads:
	\[
	\begin{cases}
		\displaystyle \min_{\widehat\bfu^{(i)}\in\widehat{V}_0^h\atop \bar\gammab^{(i)}\in\bbR^{3N_e}} & \Phi(\widehat\bfu^{(i)}) + \frac12 \frac{1}{2\tau} \bfL (\bar\gammab^{(i)}-\bar\gammab^{(i-1)})\cdot (\bar\gammab^{(i)}-\bar\gammab^{(i-1)})\\
		& \displaystyle \bfK\, \widehat\bfu^{(i)}=\bfB\,\bar\gammab^{(i)} + \bff,\\
		& \displaystyle \bfa\cdot(\bar\gammab^{(i)}-\bar\gammab^{(i-1)})  = \Gamma|\Omega| \quad \text{ or } \quad \bfA(\bar\gammab^{(i)}-\bar\gammab^{(i-1)}) = \Gamma \widetilde{\mathbf{1}}
	\end{cases}
	\]
	where $\widetilde{\mathbf{1}} = \sum_{e=1}^{N_e}\widetilde{\bfe}_e$.
	Solving explicitly the equilibrium equation gives  $\widehat\bfu^{(i)} = \bfK\inv(\bfB\,\bar\gammab^{(i)} + \bff)$. Substituting into the objective functional reduces the problem to a constrained minimization in the growth variable alone:
	\[
	\begin{cases}
		\displaystyle \min_{\bar\gammab^{(i)}\in\bbR^{3N_e}} &  \Psi(\bar\gammab^{(i)})  + \frac12 \frac{1}{2\tau} \bfL (\bar\gammab^{(i)}-\bar\gammab^{(i-1)})\cdot (\bar\gammab^{(i)}-\bar\gammab^{(i-1)})\\
		& \displaystyle \bfa\cdot(\bar\gammab^{(i)}-\bar\gammab^{(i-1)})  = \Gamma|\Omega| \quad \text{ or } \quad \bfA(\bar\gammab^{(i)}-\bar\gammab^{(i-1)}) = \Gamma \widetilde{\mathbf{1}}.
	\end{cases}
	\]
	where for brevity $\Psi(\bar\gammab^{(i)}):=\Phi(\bfK\inv(\bfB\,\bar\gammab^{(i)} + \bff))$.
	To derive first-order optimality conditions, we introduce Lagrange multipliers.
	The Lagrangian for the minimization problem is
	\[\calL(\bar\gammab^{(i)}, \lambda^{(i)}/\pmb{\lambda}^{(i)}) = \Psi(\bar\gammab^{(i)})  + \frac12 \frac{1}{2\tau} \bfL (\bar\gammab^{(i)}-\bar\gammab^{(i-1)})\cdot (\bar\gammab^{(i)}-\bar\gammab^{(i-1)}) + \begin{cases}
		\lambda^{(i)}(\bfa\cdot(\bar\gammab^{(i)}-\bar\gammab^{(i-1)})  - \Gamma|\Omega|)\\
		\pmb{\lambda}^{(i)}\cdot(\bfA(\bar\gammab^{(i)}-\bar\gammab^{(i-1)}) - \Gamma \widetilde{\mathbf{1}})
	\end{cases}\]
	where $\lambda^{(i)}\in\bbR$ and $\pmb{\lambda}^{(i)}\in\bbR^{N_e}$ are the Lagrange multiplier for the global and local mass balance, respectively.
	The Karush–Kuhn–Tucker (KKT) conditions (stationarity of the Lagrangian with respect to  the Lagrange multiplier and $\bar\gammab^{(i)}$) reads:
	\[\begin{cases}
		\bfa\cdot(\bar\gammab^{(i)}-\bar\gammab^{(i-1)})  - \Gamma|\Omega| = 0,\\
		\nabla\Psi + \frac{1}{2\tau}\bfL (\bar\gammab^{(i)} - \bar\gammab^{(i-1)}) + \lambda^{(i)}\bfa = \mathbf{0},
	\end{cases} \qquad \begin{cases}
		\bfA(\bar\gammab^{(i)}-\bar\gammab^{(i-1)})  - \Gamma\widetilde{\mathbf{1}} = \mathbf{0},\\
		\nabla\Psi + \frac{1}{2\tau}\bfL (\bar\gammab^{(i)} - \bar\gammab^{(i-1)}) + \bfA\tran\pmb{\lambda}^{(i)} = \mathbf{0},
	\end{cases}\]
	depending on whether the mass balance is in global or local form.
	Since $\bfL$ (defined in \eqref{eqL}) is symmetric and strictly positive definite, it is invertible.  Solving for $(\bar\gammab^{(i)} - \bar\gammab^{(i-1)})$ in the second equation and substituting in the first to obtain $\lambda^{(i)}$, after simple manipulations we arrive for the global mass balance case to
	\begin{equation}\label{cfs}
		\begin{cases}
			\displaystyle -\lambda^{(i)} = \frac{\bfa\cdot(\bfL\inv\nabla\Psi )  + \frac{1}{2\tau}\Gamma|\Omega|}{|\sqrt{\bfL\inv}\bfa|^2},\\
			\displaystyle\bar\gammab^{(i)} - \bar\gammab^{(i-1)} = -2\tau \sqrt{\bfL\inv}(\bfI - \frac{\sqrt{\bfL\inv}\bfa}{|\sqrt{\bfL\inv}\bfa|}\otimes \frac{\sqrt{\bfL\inv}\bfa}{|\sqrt{\bfL\inv}\bfa|}) \sqrt{\bfL\inv}\nabla\Psi   + \Gamma|\Omega|\frac{\bfL\inv\bfa}{|\sqrt{\bfL\inv}\bfa|^2}.
	\end{cases}\end{equation}
The matrix $\sqrt{\bfL^{-1}}$ such that $\sqrt{\bfL^{-1}}\sqrt{\bfL^{-1}} = \bfL^{-1}$ is well-defined and is given by $$\sqrt{\bfL^{-1}} = \sum_{e=1}^{N_e}\frac{1}{\sqrt{2|T_e|}}(\bar\bfe_{3e-2}\otimes\bar\bfe_{3e-2}+ \bar\bfe_{3e-1}\otimes\bar\bfe_{3e-1}+ \bar\bfe_{3e}\otimes\bar\bfe_{3e}).$$
	For the local mass balance case, with similar computations and crucially observing that $\bfA\bfL\inv\bfA\tran$ is invertible\footnote{For a matrix $\bfM\in \bbR^{m\times n}$ with $m\leq n$ with full rank it is well known that $\bfM\bfM\tran$ is invertible. Apply to $\bfA\sqrt{\bfL\inv}$.}, we obtain
	\begin{equation}\label{cfsloc}
		\begin{cases}
			\displaystyle  -\pmb{\lambda}^{(i)} = \bfU\sqrt{\bfL\inv}\nabla\Psi  + \frac{1}{2\tau}\Gamma(\bfA\bfL\inv\bfA\tran)\inv\widetilde{\mathbf{1}}, \\
			\displaystyle\bar\gammab^{(i)} - \bar\gammab^{(i-1)} = -2\tau \sqrt{\bfL\inv}(\bfI - \bfP) \sqrt{\bfL\inv}\nabla\Psi   + \Gamma\sqrt{\bfL\inv}\bfV\widetilde{\mathbf{1}},
		\end{cases}
	\end{equation}
where $$\bfP:=\sqrt{\bfL\inv}\bfA\tran(\bfA\bfL\inv\bfA\tran)\inv\bfA\sqrt{\bfL\inv}, \qquad \bfV := \sqrt{\bfL\inv}\bfA\tran(\bfA\bfL\inv\bfA\tran)\inv, \qquad \bfU:=(\bfA\bfL\inv\bfA\tran)\inv\bfA\sqrt{\bfL\inv}.$$
It is easy to see that $\bfP$ in \eqref{cfsloc} is an orthogonal projection, since $\bfP^2=\bfP$ and $\bfP = \bfP\tran$. Moreover
 $\bfA\sqrt{\bfL\inv} \bfV = \bfI$ and  $\bfU \sqrt{\bfL\inv}\bfA\tran  = \bfI$, i.e., $\bfV$ is a right inverse for $\bfA\sqrt{\bfL\inv}$ and $\bfU$ is a left inverse for $(\bfA\sqrt{\bfL\inv})\tran$.

	Some general remarks are in order.
	It is remarkable that $\lambda^{(i)}$, $\pmb{\lambda}^{(i)}$ and $\bar\gammab^{(i)} - \bar\gammab^{(i-1)}$ depend on time only through $\nabla\Psi=\nabla\Psi(\bar\gammab^{(i)})$. Thus, if $\nabla\Psi$ does not depend on time, the growth proceeds linearly in time.
	The mass increment decomposes into two contributions:
	a mechanics-driven term proportional to $\nabla\Psi$ and
	a purely volumetric term proportional to $\Gamma$.
	The projectors $(\bfI - \frac{\sqrt{\bfL\inv}\bfa}{|\sqrt{\bfL\inv}\bfa|}\otimes \frac{\sqrt{\bfL\inv}\bfa}{|\sqrt{\bfL\inv}\bfa|})$ and $(\bfI - \bfP)$
	remove the volumetric component of the mechanically induced term in order to satisfy the mass balance.
	In particular, growth evolves along a projected gradient descent direction.

	Note also that the solution in \eqref{cfs} and \eqref{cfsloc} satisfies the eigenvalue constraint \eqref{dper5} at every iteration if the term $\frac{\nabla\Psi}{\frac{1}{2\tau}}$ remains sufficiently small. Indeed, the volumetric term is a vector of nonnegative components, while  the mechanics-driven term has no predefined sign\footnote{Actually, the matrix $\sqrt{\bfL\inv}(\bfI - \bfP) \sqrt{\bfL\inv}$ is positive semidefinite. It follows from the fact that a projection ($\bfI - \bfP$) is always positive semidefinite and that $\sqrt{\bfL\inv}$ is positive definite}. If it is the case that $\frac{\nabla\Psi}{\frac{1}{2\tau}}$ remains sufficiently small during growth, the eigenvalue constraint is satisfied at all times, and a solution to \eqref{cfs} or \eqref{cfsloc} is (at least) an admissible solution in the exact sense. At the limit case $\frac{1}{2\tau} \to \infty$ the mechanically induced contribution vanishes.

	We specialize now the analytical solution when the objective function is external work, comparing then the numerical results previously obtained in Sec. \ref{secnum} and the analytical solution.
	In this case $\Phi(\widehat\bfu^{(i)}) = \bff \cdot  \widehat\bfu^{(i)}$ and so $\nabla\Psi = (\bfK\inv\bfB)\tran\bff$.
	The relative Karush–Kuhn–Tucker (KKT) conditions read:
	\[\begin{cases}
		\bfa\cdot(\bar\gammab^{(i)}-\bar\gammab^{(i-1)})  - \Gamma|\Omega| = 0,\\
		(\bfK\inv\bfB)\tran\bff + \frac{1}{2\tau}\bfL (\bar\gammab^{(i)} - \bar\gammab^{(i-1)}) + \lambda^{(i)}\bfa = \mathbf{0},
	\end{cases}\]
	from which
	\begin{equation}\label{cfsw}
		\begin{cases}
			\displaystyle  -\lambda^{(i)} = \frac{\bfa\cdot(\bfL\inv\bfB\tran\bfK\inv\bff )  + \frac{1}{2\tau}\Gamma|\Omega|}{|\sqrt{\bfL\inv}\bfa|^2}, \\
			\displaystyle\bar\gammab^{(i)} - \bar\gammab^{(i-1)} = -2\tau \sqrt{\bfL\inv}(\bfI - \frac{\sqrt{\bfL\inv}\bfa}{|\sqrt{\bfL\inv}\bfa|}\otimes \frac{\sqrt{\bfL\inv}\bfa}{|\sqrt{\bfL\inv}\bfa|}) \sqrt{\bfL\inv}\bfB\tran\bfK\inv\bff   + \Gamma|\Omega|\frac{\bfL\inv\bfa}{|\sqrt{\bfL\inv}\bfa|^2}.
	\end{cases}\end{equation}
	As anticipated in the general case, $\lambda^{(i)}$ and $\bar\gammab^{(i)} - \bar\gammab^{(i-1)}$ do not depend on time since $\nabla\Psi$ does not. Accordingly, growth evolves linearly in time.
	Note also that the solution in \eqref{cfsw} satisfies the  eigenvalue constraint \eqref{dper5} at every iteration if $-2\tau (\bfI - \frac{\sqrt{\bfL\inv}\bfa}{|\sqrt{\bfL\inv}\bfa|}\otimes \frac{\sqrt{\bfL\inv}\bfa}{|\sqrt{\bfL\inv}\bfa|}) \sqrt{\bfL\inv}\bfB\tran\bfK\inv\bff   + \Gamma|\Omega|\frac{\bfL\inv\bfa}{|\sqrt{\bfL\inv}\bfa|^2}$, which is time independent, does. Actually, the second term satisfies the constraint, while the first term need not. We can however choose $\frac{1}{2\tau}$  sufficiently large such that the eigenvalue constraint is satisfied, and a solution to \eqref{cfsw} is (at least) an admissible solution of problem \eqref{workopt22}.
	It happens that for the values in Tab. \ref{table:1} the eigenvalue constraint is satisfied (numerics shows that, for the values in Tab. \ref{table:1}, the smallest $\frac{1}{2\tau}$ such that the eigenvalue constraint is satisfied is about $4.6$).

	Figure \ref{ddana} shows the deformed, grown configuration and the field $(\bfE_g^{(i)})_{11}$ after $30$ iterations for the doubly-clamped case. These pictures should be compared to Figs. \ref{defd} and \ref{g11d}.
	After $30$ iterations, the objective function value is $0.00092$ for the analytic solution and $0.00091$ for the complete numerical solution. Moreover, the $L^2(\Omega)$-norm of the difference of the two minimizers is $0.0152$. The two minimizers are therefore practically indistinguishable, confirming the validity of the closed-form solution in this regime.
	\begin{figure}[!hbt]
		\centering
		\subfloat[Deformed, grown configuration $i=30$]{\includegraphics[trim={0 0 0 0.6cm},clip,width=0.45\columnwidth]{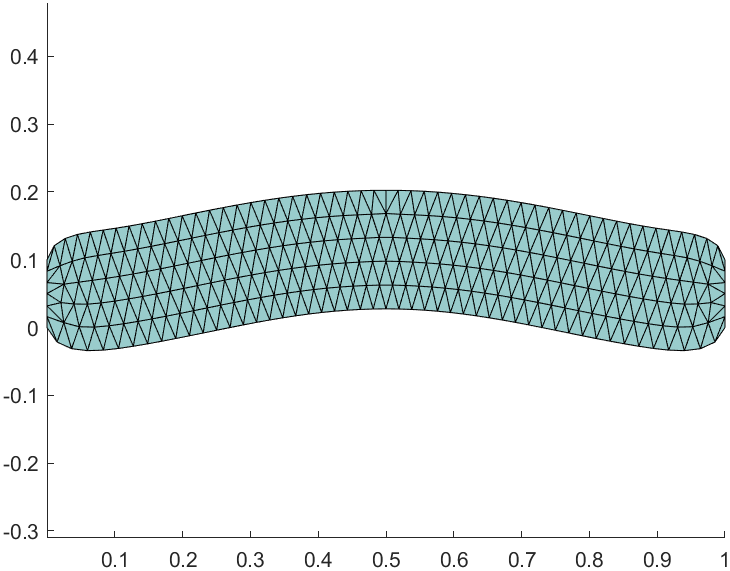}}\hfill
		\subfloat[$(\bfE_g^{(30)})_{11}$]{\includegraphics[trim={0 0 0 0.6cm},clip,width=0.45\columnwidth]{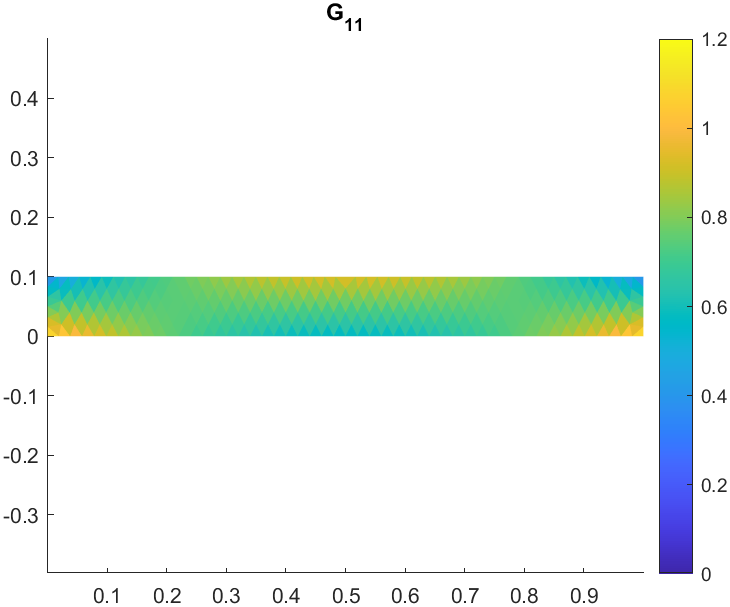}}
		\caption{Analytical solution for the doubly clamped beam growth}
		\label{ddana}
	\end{figure}
	For the perimeter case, Fig. \ref{vmdana} shows the radial and hoop stresses after solving \eqref{cfsloc} after $500$ steps, with the same numerical values of Tab. \ref{table:1}. Note that the second equation in \eqref{cfsloc} is an implicit one in $\bar\gammab^{(i)}$. To simplify, we  evaluate $\nabla\Psi$ in $\bar\gammab^{(i-1)}$. Fig. \ref{vmdana} should be compared with Fig. \ref{vmd}. We remark that with the choice $\frac{1}{2\tau}=100$ the eigenvalue constraint is not satisfied within the first fifty (roughly) iterations. Nevertheless, the final solution is consistent with the numerical simulations. The optimal objective function value is $16.37$ in the complete numerical simulation and $16.31$ for the analytic solution. The difference is probably due to the fact that the analytical solution does not satisfy the eigenvalue constraint at all time.

		\begin{figure}[!hbt]
		\centering
		\subfloat[radial stress $i=500$]{\includegraphics[trim={0 0 0 0.6cm},clip,width=0.45\columnwidth]{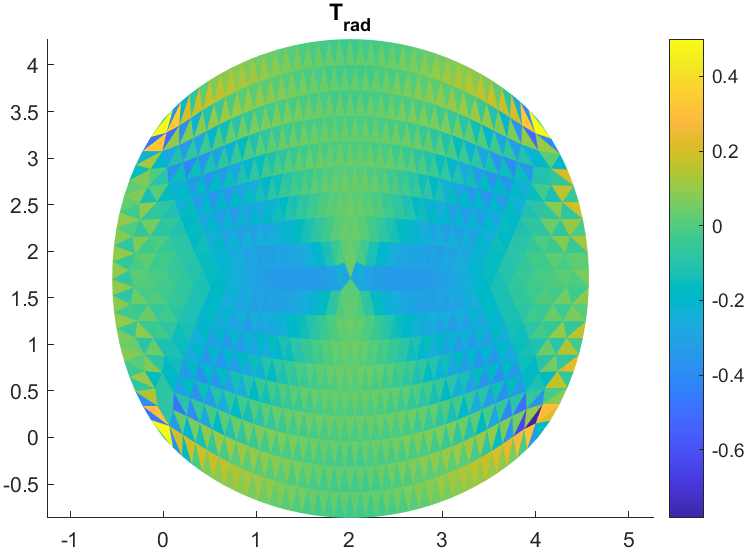}}\hfill
		\subfloat[tangential stress $i=500$]{\includegraphics[trim={0 0 0 0.6cm},clip,width=0.45\columnwidth]{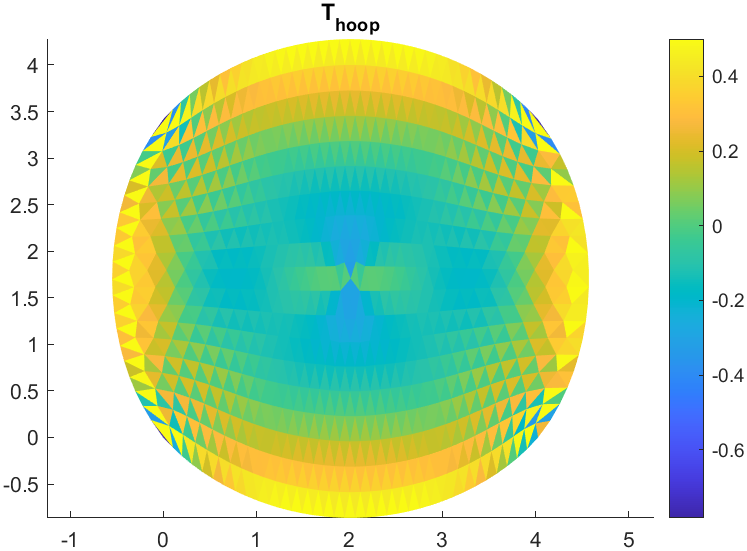}}
		\caption{Residual radial and tangential stress components, perimeter case, analytical solution}
		\label{vmdana}
	\end{figure}

	\section{Conclusions}\label{secconc}

	We have shown, using linearized elasticity as a convenient framework, that growth can be formulated as an optimization-driven process. The central modeling choice is that the growth tensor $\bfE_g$
	is not prescribed phenomenologically through  evolution laws. Instead, its evolution is determined implicitly by the solution of a constrained minimization problem at each incremental step.

	This perspective marks a fundamental distinction with respect to the existing literature: growth is not treated as a descriptive kinematic input, but as the outcome of a mechanical selection principle. The material adapts by choosing, among all admissible growth increments satisfying the imposed constraints, the one that optimizes a given objective functional.

	In this paper, we have considered two different driving mechanisms: minimization of external work (mechanically driven adaptation) or perimeter reduction (geometrically driven evolution). In all cases, our model predicts how growth distributes spatially as a response to the chosen objective.
	In the former case, the body adapts its geometry to increase stiffness and reduce external work. In the latter case, the evolution toward circular shapes could be expected from the isoperimetric principle. These outcomes are predictions of the model, not assumptions embedded into it.

	The linear theory thus serves as a proof of concept: mechanically constrained optimization alone is sufficient to generate growth-induced morphologies. This paves the way to extend our approach to fully nonlinear settings and to more realistic biological or engineered systems, where growth is governed by competing energetic or functional objectives.

	Future developments may include the incorporation of   transport or biochemical processes, and a rigorous analysis of the continuous-time limit of the discrete scheme. From the computational standpoint, more advanced optimization strategies and large-scale simulations could further explore the rich morphological behaviors emerging from growth driven by optimality principles.

\medskip

\noindent
{\bf Acknowledgments.} We gratefully acknowledge the support of the MIT-UNIPI Seed Fund.\\
  R.P.  and  M.P.S. thankfully acknowledges the support of the Italian National Group of Mathematical Physics (GNFM-INdAM).

\end{document}